\definecolor{bblue}{rgb}{.2,0.2,.8}
\theoremstyle{plain}
\newtheorem{theorem}{Theorem}[section]
\newtheorem{lemma}[theorem]{Lemma}
\theoremstyle{definition}
\theoremstyle{remark}
\newtheorem{remark}[theorem]{Remark}
\newtheorem{example}[theorem]{Example}
\newcommand{\pare}[1]{\left(#1\right)}%comando per le parentesi
\numberwithin{equation}{section}
\numberwithin{theorem}{section}
\def\be{\begin{equation}}
\def\ee{\end{equation}}
\def\bp{\begin{pmatrix}}
\def\ep{\end{pmatrix}}
\def\bea{\begin{eqnarray}}
\def\eea{\end{eqnarray}}
\def\\{\par\medskip}
\newcommand{\id}{{1 \mskip -5mu {\rm I}}}
\renewcommand{\epsilon}{\varepsilon}
\renewcommand{\hat}{\widehat}
\title[Mass exchange models]{ On a class of solvable stationary non equilibrium states for mass exchange models }
\author{M. Capanna}
\address{\noindent Monia Capanna \hfill\break\indent
	Banca d'Italia, 
	\hfill\break\indent
	Via Otricoli, 41, 00181 Roma, Italy
}
\email{monia.capanna@bancaditalia.it}
\author{D. Gabrielli}
\address{\noindent Davide Gabrielli \hfill\break\indent
 DISIM, Universit\`a dell'Aquila
\hfill\break\indent
Via Vetoio Loc. Coppito, 67100 L'Aquila, Italy
}
\email{davide.gabrielli@univaq.it}
\author{D. Tsagkarogiannis}
\address{\noindent Dimitrios Tsagkarogiannis \hfill\break\indent
	DISIM, Universit\`a dell'Aquila
	\hfill\break\indent
	Via Vetoio Loc. Coppito, 67100 L'Aquila, Italy
}
\email{dimitrios.tsagkarogiannis@univaq.it}
\begin{document}

\begin{abstract}
We consider a family of models having an arbitrary positive amount of mass on each site and randomly exchanging an arbitrary amount of mass with nearest neighbor sites. We restrict to the case of diffusive models. We identify a class of reversible models for which the product invariant measure is known and the gradient condition is satisfied so that we can explicitly compute the transport coefficients associated to the diffusive hydrodynamic rescaling. Based on the Macroscopic Fluctuation Theory \cite{mft} we have that the large deviations rate functional for a stationary non equilibrium state can be computed solving a Hamilton-Jacobi equation depending only on the transport coefficients and the details of the boundary sources. Thus, we are able to identify a class of models having transport coefficients for which the Hamilton-Jacobi equation can indeed be solved. We give a complete characterization in the case of generalized zero range models and discuss several other cases. For the generalized zero range models we identify a class of discrete models that, modulo trivial extensions, coincides with the class discussed in \cite{FG} and a class of continuous dynamics that coincides with the class in \cite{FFG}. Along the discussion we obtain a complete characterization of reversible misanthrope processes solving the discrete equations in \cite{CC}.

\end{abstract}

\noindent
\keywords{Stochastic particle systems, transport coefficients, stationary non equilibrium states}

\subjclass[2010]
{Primary
60K35, %Interacting random processes; statistical mechanics type
       %models; percolation theory
60F10; %Large deviations
Secondary
82C22, %Interacting particle systems in time-dependent statistical mechanics
82C70. %Transport processes in time-dependent statistical mechanics
}

\maketitle
\thispagestyle{empty}

\section{Introduction}
Among the various challenges in non-equilibrium thermodynamics is the understanding of stationary non equilibrium states (SNS). In the last years stochastic interacting particle systems played a crucial role in this direction. This is due to the fact that they are effective toy models for which, sometimes, even exact computations are possible.
In particular, an important breakthrough has been the exact computation of the large deviations rate functional for an one-dimensional SNS of some special stochastic interacting system \cite{mft,D}.

The prototype of such solvable models is the simple exclusion process \cite{Li}. For such a model in the one-dimensional case it is possible to give a combinatorial representation of the invariant measure in terms of a product of matrices. From this, it is possible to deduce the large deviations rate functional \cite{DLS},
which is written in terms of a nonlinear boundary value problem and it is non local, a fact that reflects the presence of long range correlations. Similar combinatorial arguments can be developed for the asymmetric version of the model \cite{DLS-a1,DLS-a2}.

An alternative approach is the dynamical variational one as presented in 
the macroscopic fluctuation theory \cite{mft}.
The large deviations rate functional for the SNS of boundary driven exclusion process can be computed as the quasipotential of the dynamic large deviations rate functional from the hydrodynamic limit \cite{BDGJL1,BDGJL2}. This approach can be applied to general models with bulk dynamics that is gradient and reversible. Within this approach the large deviations rate functionals are only related to the macroscopic properties of the model, in particular the transport coefficients, that is the diffusion matrix and the mobility. In particular, an explicit computation of a non local large deviations rate functional for the SNS is possible for any one-dimensional model having a constant diffusion coefficient and a mobility that is a second order polynomial in the density. Such an example is the Kipnis-Marchioro-Presutti (KMP) model and its dual \cite{BGL}.

Consequently, a problem of interest is to construct models whose bulk dynamics is at the same time gradient and reversible and for which the diffusion coefficient is constant and the mobility is quadratic. This is in general a difficult task \cite{Spohn}. For models of this type, since the large deviations rate functional for the SNS can be computed macroscopically, we expect that relatively simple and interesting microscopic descriptions of the invariant measures could be found.

in this paper we investigate models for which on each site there can be an arbitrary amount of mass, either discrete or continuous, and in a single jump an arbitrary amount of mass may jump. We were inspired by other models of this type found in the literature (for example \cite{EMSZ,RS}) and especially by the papers \cite{FG,FFG} where the authors compute all moments of the SNS for a class of boundary driven one-dimensional generalized zero range dynamics.
Note that for the classic zero range dynamics and a few other models \cite{DMF,DMOP,EH}, even in the boundary driven case, no long range correlations appear, while instead the generalized versions in \cite{FG,FFG} have long range correlations.

We develop a systematic analysis of the models that are at the same time gradient and reversible and such that they have constant diffusion and quadratic mobility. We consider models of zero range type as
well as models for which the rate of jump depends on the occupation variables both of the starting and the arrival site, like in \cite{CC}. In particular, our main result, Theorem~\ref{teozr}, is a complete classification in the case of zero range dynamics. According to this we obtain, modulo some trivial generalizations, that there is a whole family of discrete as well as continuous models that are exactly coinciding with those in \cite{FG} and \cite{FFG}, respectively. In \cite{FG,FFG} the models were introduced and characterized by motivations coming from quantum spin systems; we characterize the same classes in a different way and moreover establish that there are no other zero range models having this integrability property.
Along the way we also give a complete solution to the discrete equations in \cite{CC}.

A motivation for our analysis is related to the conjecture in \cite{BDGJLstat}: for solvable SNS with quadratic and convex mobility the non local large deviation rate functional can be interpreted as the result of the contraction principle of hidden variables; this means that we may expect that even microscopically the SNS may have a simple description by introducing some hidden variables. This has been shown to be true in the KMP model and for its dual in \cite{DMFG}.
For a zero range dynamics with quadratic mobility we can have only a convex mobility and this means that the generalized zero range models in Theorem \ref{teozr} are natural candidates to verify the conjecture in \cite{BDGJLstat}. This has indeed been obtained in \cite{CFGGT} for some specific cases where the representation is simpler and then generalized
to a larger class of models in \cite{CFFGR}.

\smallskip

The structure of the paper is the following. 

\noindent In Section~\ref{s:models} we introduce the basic concepts and notation to handle
generalized models where an arbitrary amount of mass may jump and then
we identify general classes of models that are both reversible and gradient 
of the following type: zero range (Section~\ref{sec:zr}), KMP (Section~\ref{sec:kmp})
and Misanthrope (Section~\ref{sec:mis}) for both the discrete and the continuous case. In particular we give the general solution of the discrete equations in \cite{CC}.

\noindent In Section~\ref{sec:TC} after a short introduction about some challenges for proving the hydrodynamic limit and large deviations, we show how the transport coefficients, namely the diffusion and the mobility, are computed.

\noindent The main results can be found in Section~\ref{sec:ZRD} where in Theorem~\ref{teozr} we identify exactly the class of zero range processes that will have a computable non local large deviations rate functional for the SNS and in Section~\ref{sec:other} where we compute the transport coefficients for some cases of the KMP-type and Misanthrope-type models we considered earlier.

\section{Models}\label{s:models}
Let $\Lambda_N$ be a graph and call $\mathcal E_N$ the collection of unoriented edges. We fix as reference a canonical orientation; i.e., we fix an
oriented graph with vertices $\Lambda_N$ and oriented edges $E_N$ such that for each unoriented edge $\{x,y\}\in \mathcal E_N$ there exists only one corresponding fixed orientation, either $(x,y)$ or $(y,x)$, that belongs to $E_N$. This fixed orientation does not influence the dynamics and it is fixed just for notational convenience.

We consider a configuration of masses $\eta\in \mathbb R_+^{\Lambda_N}$ associated to the nodes of the lattice. The value $\eta(x)\in \mathbb R_+$ represents the amount of mass present at $x\in \Lambda_N$.
In the following $\Lambda_N$ will almost always be a one dimensional lattice but we think it is useful to define the models on an arbitrary graph.

\subsection{Invariant measures}
We consider a family of product measures on $\mathbb R_+^{\Lambda_N}$ having one dimensional marginal given by
\begin{equation}\label{mis}
g_\lambda(d\eta(x)):=\frac{e^{\lambda\eta(x)}}{Z(\lambda)}g(d\eta(x))\,, \quad x\in \Lambda_N\,,\quad Z(\lambda):=\int_{\mathbb R_+}g(d\eta)e^{\lambda\eta},
\end{equation}
where $\lambda$ is a parameter that represents the chemical potential and $Z(\lambda)$
is a normalization factor.

The chemical potential $\lambda$ modulates the typical density associated to the measure. We have indeed that the function
\begin{equation}\label{robeta}
\rho(\lambda)=\int_{\mathbb R_+} g_\lambda (d\eta)\eta=\frac{Z'(\lambda)}{Z(\lambda)}=\frac{d}{d\lambda} \log Z(\lambda)\,,
\end{equation}
is monotone increasing. We have that \eqref{robeta} represents the typical density corresponding to the value $\lambda$ of the chemical potential; we call $\lambda(\rho)$ the corresponding inverse function. In general the probability measure $g_\lambda$ may not be defined for all values of $\lambda$. We have indeed a well defined probability measure for $\lambda<\lambda_c$, where $\lambda_c$ may coincide with $+\infty$. The critical value $\lambda_c$ is defined by
\begin{equation}\label{lambdac}
\lambda_c:=\inf\left\{\lambda\,,\, Z(\lambda)=+\infty\right\}\,.
\end{equation}
In the models we will consider we assume that
$\lim_{\lambda\uparrow \lambda_c}\rho(\lambda)=+\infty$, in such a way that varying the chemical potential we can obtain any possible typical value of the density.

The measure $g(d\eta)$ in \eqref{mis} is a generic positive measure on $\mathbb R^+$. We will consider both cases when the positive measure $g$ is absolutely continuous with respect to the Lebesgue measure, i.e., when $g(d\eta)=g(\eta)d\eta$  as well as when $g$ is a purely atomic measure, i.e., it is of the form
$g(d\eta)=\sum_{i=1} g\left(\eta^{(i)}\right) \delta_{\eta^{(i)}}(d\eta)$ for some configurations $\eta^{(i)}$. With a slight abuse of notation we call again $g$ both the density and the weights of the delta measures.

\begin{example}
	An important example is the absolutely continuous family of measures having density $g\equiv 1$. In this case
	$\lambda_c=0$, the measure $g_\lambda$ is an exponential distribution with parameter $|\lambda|$ and $Z(\lambda)=\rho(\lambda)=1/|\lambda|$.
\end{example}

We will often write statements using the notation of the absolutely continuous case  but the results hold also in the purely atomic case with a simple generalization of the formulation. In special cases we will explicitly discuss the purely atomic case too.

\subsection{Conservative exchange dynamics}
We consider a general conservative exchange dynamics on $\mathbb R_+^{\Lambda_N}$ where mass can be exchanged between nearest neighbor sites. The general form of the generator is given by
\begin{equation}\label{generatore}
\mathcal L_Nf(\eta)=\sum_{\{x,y\}\in \mathcal E_N}\int r_{\{x,y\}}(\eta,d \eta')\left[f(\eta')-f(\eta)\right],
\end{equation}
where $r_{\{x,y\}}(\eta,\cdot)$ is a positive measure on $\mathbb R_+^{\Lambda_N}$ concentrated on the configurations $\eta'$ such that $\eta'(z)=\eta(z)$ when $z\neq x,y$ and $\eta(x)+\eta(y)=\eta'(x)+\eta'(y)$.
We assume that $r_{\{x,y\}}(\eta,\cdot)$ depends on $\eta$ just by the values $\eta(x)$ and $\eta(y)$.

\smallskip

We will mainly consider regular lattices with a stochastic mechanism that is the same on each bond $\{x,y\}$ so that there exists a positive measure  $r(\zeta,d\zeta')$ on $\mathbb R_+^2$, with also $\zeta\in \mathbb R_+^2$, such that $r_{\{x,y\}}$ is obtained by $r$ with $\zeta=(\zeta_1,\zeta_2)=(\eta(x),\eta(y))$ and $\eta'(z)=\eta(z)$ for any $z\neq x,y$ and
$\eta'(x)=\zeta'_1$ and $\eta'(y)=\zeta'_2$. Here $(x,y)\in E_N$ is the oriented bond corresponding to $\{x,y\}\in \mathcal E_N$. We will always consider cases when the measure $r$ is symmetric in the simultaneous exchange of the indices $1,2$ in $\zeta$ and $\zeta'$, but in principle this constraint could be removed for models with a directional asymmetry.

Note that the support of the measure $r$ is contained in the one dimensional subspace identified by
$S:=\left\{\zeta':\zeta_1+\zeta_2=\zeta'_1+\zeta'_2\right\}$.

We do not discuss mathematical details of the generator \eqref{generatore} and the definition of the corresponding stochastic dynamics for which we could possibly add
proper boundary conditions. We stress that this class of models is very large and contain very different models that may exhibit condensation and may have pure jump paths.
We will only briefly discuss these features.

\subsection{Current}
Since the measure $r$ is concentrated on a one dimensional subset of $\mathbb R_+^2$, a convenient alternative way of describing the evolution is in terms of the measure $R(\zeta, \cdot)$ on $\mathbb R$ denoting the rate at which a given current is observed across the bond $(x,y)$ (a similar approach has been considered in \cite{DCG}). When the configuration is changing from $\eta$ to $\eta'$, with the constraints discussed in the previous section, the current observed on $(x,y)$ is given by
$\eta(x)-\eta'(x)$. The measure $R(\zeta, dj)$ with $\zeta=(\zeta_1,\zeta_2)$ has then support in $[-\zeta_2,\zeta_1]$. When the measures $r$ and $R$ can be described by densities we simply have $R(\zeta,j)=r(\zeta,\zeta_1-j,\zeta_2+j)$; a similar relation holds in the case of jumps of discrete values of the mass.

We will mainly consider the situation when there is not a preferred direction on the system that is summarized by the symmetry relation
\begin{equation}\label{simcard}
R((\zeta_1,\zeta_2),j)=R((\zeta_2,\zeta_1),-j)\,;
\end{equation}
the meaning of this relation is that if you exchange the masses at the extremes of an edge the observed current is reversed.

Denoting by $\delta^x$ the element of $\mathbb R^{\Lambda_N}_+$ such that $\delta^x_x=1$ and $\delta^x_y=0$
for any $y\neq x$, the generator \eqref{generatore} can be therefore written as
\begin{equation}\label{generatoreR}
\mathcal L_Nf(\eta)=\sum_{(x,y)\in E_N}\int_{-\eta(y)}^{\eta(x)} R(\eta(x),\eta(y),d j)\left[f(\eta-j\delta^x+j\delta^y)-f(\eta)\right]\,.
\end{equation}

\begin{remark}\label{remark}
Another natural viewpoint of the dynamics is in terms of flows. We will use several times this perspective.
A flow is always positive and describes the amount of mass transferred from one vertex to another one. When an amount of mass $q$ is transferred from $x$ to $y$ we say that there is a flow $q\geq 0$ from $x$ to $y$. Assuming the validity of the symmetry \eqref{simcard} we can describe the rates of flows by one single positive measure
$Q(\zeta, \cdot)$ on $\mathbb R_+$ with support on $[0,\zeta_1]$. The relations with the measure $R$ written in the absolutely continuous case is given by
\begin{equation}\label{sp-flows}
R(\zeta_1,\zeta_2,j)=Q(\zeta_1,\zeta_2,j)\id(j\geq 0)+Q(\zeta_2,\zeta_1,-j)\id(j< 0)\,.
\end{equation}
In the general case we should use two measures $Q^\pm$ respectively for positive and negative currents $j$ in the above formula.
\end{remark}

\subsection{Reversibility}

We search for measures $R$ such that the dynamics with generator \eqref{generatoreR}
is reversible with respect to the product measure $\prod_{x\in \Lambda_N}g_\lambda(d\eta(x))$.
By the conservative nature of the dynamics this is independent from the parameter $\lambda$ and the detailed balance condition is given by
\begin{equation}\label{detbal2}
g(\zeta_1)g(\zeta_2)R(\zeta,j)=g(\zeta'_1)g(\zeta'_2)R(\zeta',-j)\,,\qquad j\in [\zeta_1,-\zeta_2]\,,
\end{equation}
where $\zeta'_1=\zeta_1-j$ and $\zeta'_2=\zeta_2+j$. For notational convenience it is useful to introduce
$\tilde R(\zeta, j):=g(\zeta_1)g(\zeta_2)R(\zeta,j)$.
The following elementary lemma classifies completely the models that satisfy \eqref{detbal2}.
\begin{lemma}\label{Rtilde}
A model satisfies \eqref{detbal2} if and only if there exists an arbitrary function $\psi(\zeta,j)\geq 0, j\geq 0$ such that
the function $\tilde R(\zeta, j)$ introduced above
has the form:
\begin{equation}\label{fissoRtilde}
\tilde R(\zeta, j)=\left\{
\begin{array}{ll}
\psi(\zeta, j)\,, & j\geq 0\\
\psi(\zeta_1-j,\zeta_2+j, -j)\,, & j\leq 0\,.
\end{array}
\right.
\end{equation}
\end{lemma}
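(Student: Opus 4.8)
The plan is to recognize that, once everything is rewritten in terms of $\tilde R$, the detailed balance condition \eqref{detbal2} collapses into a single symmetry constraint, and that the constraint is governed by an involution that flips the sign of the current $j$. Writing $\zeta'=(\zeta_1-j,\zeta_2+j)$, the definition $\tilde R(\zeta,j)=g(\zeta_1)g(\zeta_2)R(\zeta,j)$ turns both sides of \eqref{detbal2} into values of $\tilde R$, yielding
\begin{equation}\label{symmetria}
\tilde R(\zeta,j)=\tilde R\pare{(\zeta_1-j,\zeta_2+j),-j}\,.
\end{equation}
The map $T(\zeta,j):=\pare{(\zeta_1-j,\zeta_2+j),-j}$ is an involution: applying it twice sends $\zeta_1\mapsto\zeta_1-j\mapsto(\zeta_1-j)-(-j)=\zeta_1$ and likewise for the second coordinate, while $j\mapsto-j\mapsto j$. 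Thus \eqref{symmetria} says exactly that $\tilde R$ is $T$-invariant, and since $T$ reverses the sign of $j$, the whole function is determined by its restriction to the half-space $j\geq 0$.

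For the ``only if'' direction I would set $\psi(\zeta,j):=\tilde R(\zeta,j)$ for $j\geq 0$; this is nonnegative because $g$ is a positive measure and $R(\zeta,\cdot)$ is a positive measure, so their product is nonnegative. For $j\leq 0$ one has $-j\geq 0$, whence \eqref{symmetria} gives $\tilde R(\zeta,j)=\tilde R\pare{(\zeta_1-j,\zeta_2+j),-j}=\psi(\zeta_1-j,\zeta_2+j,-j)$, which is precisely the second line of \eqref{fissoRtilde}. Hence every model satisfying \eqref{detbal2} admits the representation \eqref{fissoRtilde}.

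For the converse I would simply substitute the form \eqref{fissoRtilde} into \eqref{symmetria} and check the two sign regimes. When $j\geq 0$ the left side is $\psi(\zeta,j)$, while on the right the argument $-j\leq 0$ selects the second branch, giving $\psi\pare{(\zeta_1-j)-(-j),(\zeta_2+j)+(-j),-(-j)}=\psi(\zeta,j)$; when $j\leq 0$ the roles are exchanged and both sides again coincide, and at $j=0$ each branch reduces to $\psi(\zeta,0)$ so the definition is consistent. Since the verification amounts to substituting the involution $T$ into the defining formula, there is no genuine obstacle; the only point requiring mild care is the bookkeeping of which branch of \eqref{fissoRtilde} applies after reversing the sign of $j$, together with the observation that $\psi$ on $\{j\geq 0\}$ is genuinely free, so that \eqref{fissoRtilde} parametrizes all solutions without hidden constraints.
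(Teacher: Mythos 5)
Your proof is correct and follows essentially the same route as the paper: the paper's (very terse) argument also rests on the observation that \eqref{detbal2} only relates values of $\tilde R$ at currents of opposite sign, so the restriction to $j\geq 0$ is free and determines the rest. Your explicit formulation via the involution $T(\zeta,j)=\pare{(\zeta_1-j,\zeta_2+j),-j}$ is simply a cleaner, more detailed write-up of that same idea, including the check of both sign regimes and consistency at $j=0$.
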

\begin{proof}
The fact that $\psi(\zeta, j)$ can be arbitrary for positive $j$ follows by the fact that in \eqref{detbal2}
the currents on the two sides have always opposite sign and there are no constraints on currents having the same sign. The form of $\tilde R$ for negative $j$ in \eqref{fissoRtilde} is determined by the fact that \eqref{detbal2} should be satisfied for any value of the current.
\end{proof}
Using the above lemma one can easily construct all the dynamics that have an invariant measure determined by the function $g$; indeed there is a whole family depending on an arbitrary function
$\psi(\zeta,j)\geq 0, j\geq 0$. Once this function is fixed, the rates of the dynamics are given by
\begin{equation}\label{fissoR}
R(\zeta, j)=\left\{
\begin{array}{ll}
\frac{\psi(\zeta, j)}{g(\zeta_1)g(\zeta_2)}\,, & j\geq 0\\
\frac{\psi(\zeta_1-j,\zeta_2+j, -j)}{g(\zeta_1)g(\zeta_2)}\,, & j< 0\,.
\end{array}
\right.
\end{equation}
In order to have the symmetry \eqref{simcard}, the arbitrary function $\psi$ has to satisfy the relation
\begin{equation}
\psi(\zeta_1,\zeta_2,j)=\psi(\zeta_2+j,\zeta_1-j, j)\,, \qquad j\geq 0\,.
\end{equation}

\subsection{Instantaneous current and gradient condition}

Given a bond $(x,y)\in E_N$  and a configuration $\eta\in\mathbb R_+^{\Lambda_N}$ then the instantaneous current $j_\eta(x,y)$ is the average of the current flowing with respect to the positive measure given by the rate. By construction we will have $j_\eta(x,y)=-j_\eta(y,x)$. The formula for the definition of the instantaneous current is therefore the following
\begin{equation}
j_\eta(x,y):=\int_{-\eta(y)}^{\eta(x)}R(\eta(x),\eta(y),j)jdj\,, \qquad (x,y)\in E_N\,.
\end{equation}
Recall that for simplicity of notation we consider the case where $R$ is absolutely continuous.
The mathematical motivation of such a definition is that, by the general theory of Markov processes \cite{Spohn}, we have that if we call $\mathcal Q(x,y,t)$
the amount of mass flown from $x$ to $y$ up to time $t$ and define by
\begin{equation}\label{defJ}
J(x,y,t):=\mathcal Q(x,y,t)-\mathcal Q(y,x,t)
\end{equation}
the net current, we have that
\begin{equation}\label{defmart}
J(x,y,t)-\int_0^tj_{\eta_s}(x,y)ds
\end{equation}
is a martingale that plays a crucial role in the study of the limiting behavior of the system in the scaling limit as it will be briefly discussed in Section~\ref{HL}. Note that $(\eta_t)_{t\geq 0}$ is the process generated by \eqref{generatoreR}.

\smallskip

We restrict now to the case of $\Lambda_N=\mathbb Z^d/(N\mathbb Z)^d$, the $d$ dimensional discrete torus and we denote by $(\tau_x)_{x\in \Lambda_N}$ the group of translations.
The model is called of \emph{gradient type} if there exists a local function $h(\eta)$ such that
\begin{equation}\label{grcond-giu}
j_\eta(x,y)=\tau_xh(\eta)-\tau_yh(\eta)\,.
\end{equation}
Since we are assuming that the rate of jump across the bond $\{x,y\}$ depends on $\eta$ just by the values $\eta(x)$ and $\eta(y)$ then we can deduce that $h$ depends only on the value of $\eta$ on one single site.
To prove this we need a few definitions.

\smallskip
Given $D\subseteq \Lambda_N$ we denote by $\eta_D:=\left(\eta(x)\right)_{x\in D}$ the configuration $\eta$ restricted to the subset $D$. Given $\eta, \zeta$ two configurations we denote by $\eta_D\zeta_{D^c}$
the configuration that coincides with $\eta$ on $D$ and with $\zeta$ on $D^c$ (where $D^c$ denotes the complementary of $D$).

Given a function $h:\left(\mathbb R^+\right)^{\Lambda_N}\to \mathbb R$, its domain of dependence $D\subseteq \Lambda_N$ is defined by requiring the following two properties:

1) $h\left(\zeta_{D^c}\eta_D\right)=h\left(\eta\right)$ for any $\zeta$ and $\eta$,

2) for any $x\in D$ there exist a configuration $\eta$ and a real parameter $\alpha$ such that
$h(\eta+\alpha\delta^x)\neq h(\eta)$.

We call a function local if its domain of dependence is finite and does not depend on $N$.

\begin{lemma}
Given a process with generator as in \eqref{generatoreR}, if there is a local function $h$ satisfying \eqref{grcond-giu} then its domain of dependence is $D=\{0\}$ and therefore $h(\eta)=H(\eta(0))$, for a suitable real function $H$.
\end{lemma}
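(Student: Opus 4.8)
The plan is to reduce the gradient identity to a purely combinatorial statement about the translates of the domain of dependence $D$, and then play the locality of $h$ (that $D$ is finite and independent of $N$) against the large size of the torus. First I would record the observation, already implicit in the definition of $j_\eta$, that the instantaneous current depends on $\eta$ only through the two endpoint values: since both the integrand and the limits in $j_\eta(x,y)=\int_{-\eta(y)}^{\eta(x)}R(\eta(x),\eta(y),j)\,j\,dj$ involve $\eta$ solely through $\eta(x)$ and $\eta(y)$, we may write $j_\eta(x,y)=J(\eta(x),\eta(y))$ for a fixed function $J$. Consequently the gradient condition \eqref{grcond-giu} forces $\eta\mapsto \tau_xh(\eta)-\tau_yh(\eta)$ to depend on $\eta$ only through $\eta(x)$ and $\eta(y)$. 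Since $h$ has domain of dependence $D$ and $(\tau_z)$ is the group of translations, $\tau_xh$ and $\tau_yh$ are local, with domains of dependence equal to the translates $x+D$ and $y+D$ respectively.

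The key step is to upgrade this to the inclusion $(x+D)\,\triangle\,(y+D)\subseteq\{x,y\}$ for the symmetric difference. Suppose some site $w\notin\{x,y\}$ lay in exactly one of the two translates, say $w\in x+D$ but $w\notin y+D$. By property 2 of the domain of dependence applied to $\tau_xh$ there would be a configuration $\eta$ and a scalar $\alpha$ with $\tau_xh(\eta+\alpha\delta^w)\neq\tau_xh(\eta)$, while property 1 applied to $\tau_yh$ gives $\tau_yh(\eta+\alpha\delta^w)=\tau_yh(\eta)$; hence the difference $\tau_xh-\tau_yh$ would change under the perturbation at $w$. But $w\neq x,y$ means $\eta+\alpha\delta^w$ has the same values at $x$ and $y$ as $\eta$, so $J(\eta(x),\eta(y))$, and therefore the difference, is unchanged, a contradiction. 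The symmetric case is identical, which establishes the inclusion.

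Finally I would translate and run a cycle argument on the torus. By translation invariance, take the bond with $y=0$ and $x=e$, a coordinate unit vector, so that the inclusion becomes $(D+e)\,\triangle\,D\subseteq\{0,e\}$; equivalently, for every site $w\neq 0,e$ one has $w\in D\iff w-e\in D$. Now partition $\Lambda_N$ into the cycles of the shift by $e$, each of length $N$. On any cycle not meeting $\{0,e\}$ every such equivalence link is present, so membership in $D$ is constant along it; since $D$ is finite while $N$ may be taken larger than $|D|$, that constant value must be ``not in $D$''. On the cycle $C_0$ through $0$ only the two links adjacent to $0$ are broken, leaving a connected arc of $N-1$ sites of constant membership, again forced to be empty, whence $D\cap C_0\subseteq\{0\}$. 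Therefore $D\subseteq\{0\}$, and by property 1 the value $h(\eta)$ depends only on $\eta(0)$, i.e. $h(\eta)=H(\eta(0))$ with $H(a):=h(\eta)\big|_{\eta(0)=a}$ (the degenerate case $D=\emptyset$ giving a constant $H$). The main obstacle is the middle step: correctly tracking the two translates of $D$ and invoking both defining properties of the domain of dependence to convert ``the difference sees only $\eta(x),\eta(y)$'' into the clean set inclusion; once that is in hand, the locality-versus-$N$ counting is routine.
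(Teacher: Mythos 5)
Your proof is correct, and its core mechanism is the same one the paper uses: both arguments rest on the observation that $\tau_x h-\tau_y h$ must have domain of dependence contained in $\{x,y\}$ (since $j_\eta(x,y)$ sees $\eta$ only through $\eta(x),\eta(y)$), while a site lying in exactly one of the translates $x+D$, $y+D$ cannot cancel --- exactly your step of playing property 2 of one translate against property 1 of the other. Where you genuinely diverge is in how this observation is exploited to pin down $D$. The paper applies it only at the two extremal sites of $D$, written as an interval $D=[a,b]$ in one dimension: the minimum of $x+D$ and the maximum of $(x+e_1)+D$ cannot be cancelled, so they must lie in $\{x,x+e_1\}$, forcing $a=b=0$ in one stroke. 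You instead extract the full symmetric-difference inclusion $(D+e)\,\triangle\,D\subseteq\{0,e\}$, i.e. an approximate shift-invariance of $D$ away from $\{0,e\}$, and then defeat it by counting: a set that is finite uniformly in $N$ cannot have constant membership along a shift-cycle (or arc) of length of order $N>|D|$. Your finish is longer, but it is dimension-independent, makes explicit use of the torus structure and of the precise meaning of locality (that $D$ does not depend on $N$), and handles the degenerate case $D=\emptyset$ cleanly; the paper's extremal argument is shorter but leans on the one-dimensional order structure, with only a remark that it extends to higher dimension. Both are legitimate; yours can be viewed as the robust version of the same idea.
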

\begin{proof}
First of all note that, given $j_\eta$, there is not a unique $h$ satisfying \eqref{grcond-giu}. Indeed, considering $h,h'$ be two solutions of \eqref{grcond-giu} we obtain that
$\tau_x(h-h')=\tau_y(h-h')$ which in turn implies that $h-h'$ is translationally invariant. Indeed the whole set of solutions is given by $h+t$ where $h$ is a fixed solution and $t$ is a generic translation invariant function.

We concentrate now on local functions. For simplicity let us consider the one dimensional case (but the argument works in any dimension) with $y=x+e_1$ and suppose that the domain of dependence of the local function $h$ is given by an interval $D=[a,b]$. We have then that the domain of dependence of the local function on the left hand side of \eqref{grcond-giu} is $[x,x+e_1]$ while the domain of dependence of the right hand side of \eqref{grcond-giu} contains $\{a+x,b+x+e_1\}$ and this is a contradiction unless $a=b=0$.
\end{proof}

The previous analysis yields that the gradient condition corresponds to having a real function $H$ such that the first moment of the
measure $R$ can be written as
\begin{equation}\label{grad}
\int_{-\zeta_2}^{\zeta_1}R(\zeta,j)jdj=H(\zeta_1)-H(\zeta_2)\,.
\end{equation}
For the models that we consider it is rather simple to
select models satisfying the gradient condition.

\begin{example}[Zero range models]\label{zre}
This class of models will play an important role.
We call \emph{Zero Range models} the models for which the rate $Q(\zeta, \cdot)$ associated to the flow (as discussed in Remark \ref{remark}) depends only on the mass present at the starting point, i.e., $Q$ does not depend on $\zeta_2$. The corresponding rate for the current is therefore
\begin{equation}\label{Rzr}
R(\zeta,j)=Q(\zeta_1,j)\id(j\geq 0)+Q(\zeta_2,-j)\id(j<0)\,.
\end{equation}
Hence, we have
\begin{equation*}
\int_{-\zeta_1}^{\zeta_2}R(\zeta,j)jdj=\int_0^{\zeta_1}Q(\zeta_1,j)jdj-\int_0^{\zeta_2}Q(\zeta_2,j)jdj
\end{equation*}
and the gradient condition is satisfied with
the function $H$ given by
$H(a)=\int_0^aQ(a,j)jdj$.
\end{example}

\begin{example}[Additive mean zero terms]
Condition \eqref{grad} continues to be satisfied with the same function $H$ if we add a mean zero term to the rate $R$. Indeed, in this way we obtain all possible gradient models associated to a given function $H$.
Consider a gradient rate $R$ with respect to a function $H$ and add $R'$ such that $\int_{-\zeta_2}^{\zeta_1}R'(\zeta,j)j dj=0$; then $R+R'$ is gradient with respect to the same function $H$.

\smallskip

As an example we can add to a zero range rate (or to any gradient model) a term of the form
\begin{equation}\label{tutte?}
R'(\zeta,j)=S(\zeta,j)\id\left(-(\zeta_1 \wedge \zeta_2)\leq j\leq (\zeta_1 \wedge \zeta_2) \right)\,,
\end{equation}
where $S(\zeta,j)=S(\zeta,-j)$ and $a\wedge b:= \min\{a,b\}$, getting again a gradient model with the same function $H$. The only constraint that has to be satisfied is that $R+R'\geq 0$, which is always satisfied if $S\geq 0$.

\smallskip
Another possible choice is to add to $R$ a term $R'$ of the form
\begin{equation}\label{consing}
R'(\zeta, j)=\frac{A\left(\zeta, j-\frac{\zeta_1-\zeta_2}{2}\right)}{j}\,,
\end{equation}
where $A(\zeta,\cdot)$ is an antisymmetric function. Here again we have to impose a positivity condition to the total rate, and we have to carefully avoid the possible singularity for $j=0$ in \eqref{consing}. In particular if we do not want an explosion on the rate and if we have a continuous $A$ we need to impose $A\left(\zeta,\frac{\zeta_2-\zeta_1}{2}\right)=0$.

\end{example}

\subsection{Reversible+Gradient}\label{sec:zr}

We look for gradient models that are at the same time reversible with respect to the product measure with marginals $g_\lambda$. This is because for models of this type it is possible to have an explicit form of the transport coefficients. Using Lemma \ref{Rtilde}, after simple change of variables, we get that we need to find a function $\psi (\zeta, j)\geq 0, j\geq 0$ such that
\begin{equation}\label{condpsi}
\int_0^{\zeta_1}\psi(\zeta,j)jdj-\int_0^{\zeta_2}\psi(\zeta_1+j,\zeta_2-j,j)jdj=g(\zeta_1)g(\zeta_2)\Big(H(\zeta_1)-H(\zeta_2)\Big)\,,
\end{equation}
for a suitable real function $H$. Other equivalent formulations are possible but in any case it is difficult to get a clear complete classification of all the models satisfying the two conditions.
We discuss here instead some remarkable examples, finding a large and interesting class of models. We start with the Zero range already presented in Example~\ref{zre}:

\begin{example}[Zero range]\label{zre2}
The family of zero range models has been discussed in Example~\ref{zre}. They are always of gradient type and in order to satisfy the reversible condition for an arbitrary product measure with marginals $g_\lambda$,  imposing \eqref{detbal2}, we obtain after a simple computation the following special form for the rates
\begin{equation}\label{zrrates}
R(\zeta,j)=\frac{g(\zeta_1-j)}{g(\zeta_1)}S(|j|)\id(j\geq 0)+\frac{g(\zeta_2+j)}{g(\zeta_2)}S(|j|)\id(j < 0),
\end{equation}
where $S$ is an arbitrary function (details of the computation are simple and similar to computations in the examples to follow). Relation \eqref{zrrates} can be naturally written in terms of the rate $Q$ for the flow (introduced in Remark \ref{remark}) as we have an amount of mass $q\geq 0$ flowing from a site containing the amount of mass $\zeta$ to a nearest one with a rate given by $Q(\zeta, q)=\frac{g(\zeta_1-q)}{g(\zeta_1)}S(q)$. In this way the reversibility condition is always satisfied for any choice of the function $g$ in \eqref{mis}. Special cases of the zero range models \eqref{zrrates} have been considered in \cite{FG}.

A special case is when the rate $R$ does not depend on the configuration $\zeta$, i.e., $R(\zeta,j)=R(j)$. This means that the random flow across the bonds happens always with the same distribution and the interaction is given just by the positivity constraint of the mass on each vertex. In this case, in order to have the gradient condition \eqref{grad} we need to impose that $R(j)=R(-j)$ so that $H(\zeta)=\int_0^{\zeta} R(j)jdj$ is symmetric as well and we obtain indeed a special case of the zero range family. We singled out this example because it is an extremely natural example when discussing the rates in terms of the current. The reversibility condition \eqref{fissoR} is obtained selecting $g\equiv 1$ and considering
$\psi(\zeta,j)=\psi(j)$, $j\geq 0$.
\end{example}
We stress that a zero range dynamics with arbitrary mass flowing is reversible only under condition \eqref{zrrates}. This is in contrast with the classic zero range models with particles and jumps of only one single particle, which is instead always reversible. In a forthcoming example we will discuss in detail the different behavior arising when one particle versus arbitrary amount of mass can jump.

\subsection{KMP-type models}\label{sec:kmp}

\begin{example}[KMP model]\label{exkmp}
This example is a special case of a bigger class, but we discuss it separately since it is a classic model.
The classic KMP (after Kipnis, Marchioro and Presutti) model \cite{KMP} corresponds to
$R(\zeta,j)=\frac{1}{\zeta_1+\zeta_2}$, i.e., the rate for observing a given current is uniform over all possible currents and such that the total rate is one, see also \cite{DCG}. In this case we obtain $H(a)=\frac a2$ and the reversibility condition is satisfied with the measures $g_\lambda$ of exponential form that correspond to the choice $g\equiv 1$ and $\lambda <0$. This uniform characterization of the model is a different viewpoint with respect to the classic definition in terms of redistribution of energy.
\end{example}

\begin{example}[Exponential invariant measures]\label{exco}
We consider here the exponential case that corresponds to $g=1$ and describe a class of models that are reversible and gradient. The KMP model of Example \ref{exkmp} is a special case. We consider the case $H'(a)\geq 0$, $a\geq 0$ that corresponds to a stability condition; when this condition is satisfied the mass is flowing on average from the site containing more mass toward the one containing less mass. Consider the function $h(a)=H'(a)>0$ and let $A(\zeta_1+\zeta_2, a)$ be, for any $\zeta_1,\zeta_2$, an antisymmetric function on the variable $a$. Then, we have that
\begin{equation}\label{evviva}
R(\zeta,j)=\left\{
\begin{array}{ll}
\frac{h(j)}{j}+\frac 1j A\left(\zeta_1+\zeta_2,j-\frac{\zeta_1+\zeta_2}{2}\right)\,, & j\geq 0\,,\\
R(\zeta,-j)\,, & j<0\,,
\end{array}
\right.
\end{equation}
solves \eqref{condpsi} with $H(t)=\int_{0}^t h(s)ds$ and $g\equiv 1$. The result follows by a direct computation observing that
$\int_0^bA(\zeta_1,\zeta_2, a)da$ is symmetric in $b$ for any $\zeta_1,\zeta_2$. Here again in \eqref{evviva} we need to restrict the choice of the functions respecting the positivity of the rates (here $h\geq 0$ is useful) and to take care of possible singularities.
Note that the classic KMP model is obtained with $h(j)=\frac 12$ and $A(s,y)=\frac ys$ getting
$$
R(\zeta,j)=R(\zeta,-j)=\frac{1}{2j} +\frac{1}{j(\zeta_1+\zeta_2)}\left(j-\frac{\zeta_1+\zeta_2}{2}\right)=\frac{1}{\zeta_1+\zeta_2}\,, \qquad j\geq 0\,.
$$
\end{example}

\begin{example}[Geometric invariant measures]\label{ex1}

We consider the discrete case in which the measure $g_\lambda(d\eta)$, defined in \eqref{mis}, is purely atomic with mass on the integer set $\mathbb N\cup\{0\}$ and $g\equiv 1$ so that the measures $g_\lambda$ are geometric. This is a discrete version of the previous class of examples. We describe a class of discrete models that are reversible with respect to this measure and gradient. In the discrete case the condition \eqref{condpsi} which guarantees the gradient property and reversibility with respect to the measure $g_\lambda(d\eta)$ becomes
\begin{align}\label{pot}
\sum_{j=0}^{\zeta_1}j\psi\pare{\zeta, j}-\sum_{j=0}^{\zeta_2}j\psi\pare{\zeta_1+j, \zeta_2-j, j}=g(\zeta_1)g(\zeta_2)\pare{H(\zeta_1)-H(\zeta_2)},
\end{align}
with $\psi(\zeta, j)=g(\zeta_1)g(\zeta_2)R(\zeta, j)$ for $j>0$.

In analogy with Example~\ref{exco} we consider the function $h(a)=H'(a)\geq 0$, $a\geq 0$ and let $A(\zeta_1+\zeta_2, a)$ be, for any $\zeta_1,\zeta_2\in \mathbb N\cup \{0\}$, an antisymmetric function on the variable $a$. The rate
\begin{equation}\label{rataj}
R(\zeta,j)=\frac{h(j)}{j}+\frac 1j A\left(\zeta_1+\zeta_2,j-\frac{\zeta_1+\zeta_2+1}{2}\right)\,, \qquad j> 0\,,
\end{equation}
satisfies \eqref{pot} with $H(k)=\sum_{j=0}^{k} h(j)$ and $g\equiv 1$. Note that with respect to the continuous case we had to shift the center of symmetry of the function $A$. Indeed,  using that $A(\zeta, a)$ is antisymmetric in the second variable, we get that \eqref{pot} is equivalent to:
\begin{equation*}
\sum_{j=0}^{\zeta_1}h(j)-\sum_{j=0}^{\zeta_2}h(j)
+\sum_{j=\zeta_1\wedge \zeta_2+1}^{\zeta_1\vee \zeta_2}A\pare{\zeta_1+\zeta_2, j- \frac{\zeta_1+\zeta_2+1}{2}}=H(\zeta_1)-H(\zeta_2)\,,
\end{equation*}
where the third term in the left hand side vanishes.
\end{example}

\subsection{Misanthrope process}\label{sec:mis}

We consider a natural variation of the zero range models, that has as special cases (in the discrete setting discussed later on) the exclusion process, the inclusion models and the misanthrope process \cite{GRV,KR,CC}. Models of this type have been considered by several authors as for example \cite{SS,FGS}. We investigate both reversibility
and the gradient property. We distinguish the continuous case from the discrete one and start discussing the continuous.

\begin{example}[Generalized misanthrope processes]\label{inclusion}
We consider a class of models for which the rate $Q(\zeta_1,\zeta_2,q)$ of observing a flow $q\geq 0$ is factorized in the form
\begin{equation}\label{rateinclusion}
Q(\zeta_1,\zeta_2, q)=b(\zeta_1,\zeta_2)S(q)\,,
\end{equation}
for arbitrary functions $b,S$ which we assume strictly bigger than zero in order to avoid degeneracies. This means that the rate is factorized and depends on the mass present not only at the starting site but also in the final one. Apart from the fact that the mass is continuous and an arbitrary amount $q$ may flow, this is the general form of the rates investigates in \cite{CC}.

We start by considering the reversibility.

\begin{lemma}\label{inclgrad}
A model with rates given in \eqref{rateinclusion} is reversible with respect to a product measure with marginals \eqref{mis} if and only if $b(\zeta_1,\zeta_2)=\frac{C(\zeta_1+\zeta_2)}{g(\zeta_1)g(\zeta_2)}$ for an arbitrary positive function $C:\mathbb R_+\to\mathbb R_+$.
\end{lemma}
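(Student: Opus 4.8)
The plan is to directly apply the detailed balance condition \eqref{detbal2} to the specific factorized rates and read off the constraint on $b$. First I would translate the reversibility requirement for the flow rates into the language of the current rates $R$, using relation \eqref{sp-flows}. Since $Q(\zeta_1,\zeta_2,q)=b(\zeta_1,\zeta_2)S(q)$ and the current is $j=q\geq 0$ for a flow from site $1$ to site $2$, we get $R(\zeta_1,\zeta_2,j)=b(\zeta_1,\zeta_2)S(j)$ for $j\geq 0$, and by \eqref{sp-flows} the value $R(\zeta_1,\zeta_2,j)=b(\zeta_2,\zeta_1)S(-j)$ for $j<0$. I would then substitute into \eqref{detbal2} with $\zeta_1'=\zeta_1-j$, $\zeta_2'=\zeta_2+j$ and a positive current $j>0$ on the left.

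The key computation is then purely algebraic. For $j>0$ the left-hand side of \eqref{detbal2} is $g(\zeta_1)g(\zeta_2)\,b(\zeta_1,\zeta_2)S(j)$, while the right-hand side involves $R(\zeta',-j)$ at the shifted configuration; since $-j<0$, formula \eqref{sp-flows} gives $R(\zeta_1',\zeta_2',-j)=b(\zeta_2',\zeta_1')S(j)=b(\zeta_2+j,\zeta_1-j)S(j)$. Thus detailed balance reduces to
\begin{equation*}
g(\zeta_1)g(\zeta_2)\,b(\zeta_1,\zeta_2)=g(\zeta_1-j)g(\zeta_2+j)\,b(\zeta_2+j,\zeta_1-j)\,.
\end{equation*}
Writing $F(\zeta_1,\zeta_2):=g(\zeta_1)g(\zeta_2)\,b(\zeta_1,\zeta_2)$ and noting that the symmetry assumption on $r$ (hence on $b$) gives $b(\zeta_2+j,\zeta_1-j)=b(\zeta_1-j,\zeta_2+j)$, this says precisely that $F(\zeta_1,\zeta_2)=F(\zeta_1-j,\zeta_2+j)$ for all admissible $j$. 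In other words, $F$ is constant along the lines of fixed total mass $\zeta_1+\zeta_2$, so $F(\zeta_1,\zeta_2)=C(\zeta_1+\zeta_2)$ for some function $C$ of the total mass alone. Unwinding the definition of $F$ yields $b(\zeta_1,\zeta_2)=C(\zeta_1+\zeta_2)/\big(g(\zeta_1)g(\zeta_2)\big)$, which is the asserted form; positivity of $C$ follows from positivity of $g$, $b$.

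The converse direction is immediate: substituting $b(\zeta_1,\zeta_2)=C(\zeta_1+\zeta_2)/\big(g(\zeta_1)g(\zeta_2)\big)$ back into the detailed balance relation, both sides equal $C(\zeta_1+\zeta_2)S(j)$ since the total mass is conserved along the jump, so \eqref{detbal2} holds for every current. I expect the only genuinely delicate point to be bookkeeping the symmetry conventions: one must be careful that the factorization \eqref{rateinclusion} together with the directional symmetry \eqref{simcard} correctly produces $b(\zeta_2+j,\zeta_1-j)$ (rather than $b(\zeta_1-j,\zeta_2+j)$) on the reversed jump, and to verify that $b$ inherits symmetry in its two arguments from the assumed symmetry of $r$. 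Once the correct reversed rate is identified, the reduction to ``$F$ depends only on $\zeta_1+\zeta_2$'' is forced and the characterization is complete.
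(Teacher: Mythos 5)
Your setup is correct up to and including the detailed balance identity: for $j>0$,
\begin{equation*}
g(\zeta_1)g(\zeta_2)\,b(\zeta_1,\zeta_2)=g(\zeta_1-j)g(\zeta_2+j)\,b(\zeta_2+j,\zeta_1-j)\,,
\end{equation*}
and your verification of the converse direction is also fine. The gap is in the step where you claim that the symmetry assumption on $r$ forces $b(\zeta_2+j,\zeta_1-j)=b(\zeta_1-j,\zeta_2+j)$. This is false: the symmetry \eqref{simcard} only says that exchanging the masses at the two endpoints reverses the sign of the current, and it is precisely what allows the dynamics to be encoded by a single flow rate $Q$ (Remark \ref{remark}); it imposes no constraint at all on $Q$, hence none on $b$. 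Indeed, in this class of models $b(\zeta_1,\zeta_2)$ is meant to depend asymmetrically on the departure and arrival masses --- the basic examples, $b(\zeta_1,\zeta_2)=\ell(\zeta_1)L(\zeta_2)$ (zero range, inclusion, misanthrope), are not symmetric. Symmetry of $g(\zeta_1)g(\zeta_2)b(\zeta_1,\zeta_2)$ is part of the \emph{conclusion} of the lemma (the asserted form $C(\zeta_1+\zeta_2)/\bigl(g(\zeta_1)g(\zeta_2)\bigr)$ is symmetric), so it cannot be used as a hypothesis.

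Without that symmetry, detailed balance gives invariance of $F(\zeta_1,\zeta_2):=g(\zeta_1)g(\zeta_2)b(\zeta_1,\zeta_2)$ under the swap-and-transfer maps $(\zeta_1,\zeta_2)\mapsto(\zeta_2+q,\zeta_1-q)$, $q\in[0,\zeta_1]$, and \emph{not} under $(\zeta_1,\zeta_2)\mapsto(\zeta_1-q,\zeta_2+q)$. A single such move from $(\zeta_1,\zeta_2)$ only reaches points whose first coordinate lies in $[\zeta_2,\zeta_1+\zeta_2]$, so the statement ``$F$ is constant on each level set of $\zeta_1+\zeta_2$'' does not follow in one step: if $\zeta_1'<\zeta_2$, the target $(\zeta_1',\zeta_2')$ is not reachable by one move applied in either direction. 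This connectivity issue is exactly what the paper's proof handles: it applies the identity twice, first moving $(\zeta_1,\zeta_2)$ to an intermediate configuration $(\zeta_1+\zeta_2-\epsilon,\epsilon)$ with $0<\epsilon<\min\{\zeta_1,\zeta_1'\}$, and then from there to $(\zeta_1',\zeta_2')$. Your argument becomes correct once the false symmetry claim is replaced by this two-step argument.
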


\begin{proof}
	Since all rates are strictly positive, the detailed balance condition implies that
	\begin{equation}\label{ula}
	g(\zeta_1)g(\zeta_2)b(\zeta_1,\zeta_2)=g(\zeta_2+q)g(\zeta_1-q)b(\zeta_2+q,\zeta_1-q)\,, \quad \forall \zeta_1,\zeta_2,\,q\in[0,\zeta_1]\,.
	\end{equation}
	From \eqref{ula} we will deduce that
	\begin{equation}\label{unz}
	g(\zeta_1)g(\zeta_2)b(\zeta_1,\zeta_2)=g(\zeta_1')g(\zeta_2')b(\zeta_1',\zeta_2')\,,
	\end{equation}
	for any $(\zeta_1,\zeta_2), (\zeta_1',\zeta_2')$ such that $\zeta_1+\zeta_2=\zeta_1'+\zeta_2'$,
	and this implies directly
	$g(\zeta_1)g(\zeta_2)b(\zeta_1,\zeta_2)=C(\zeta_1+\zeta_2)$ for a suitable positive function $C$.
It remains to prove \eqref{unz}. In the case that $(\zeta_1',\zeta_2')=(\zeta_2+q,\zeta_1-q)$ for a $q\in[0,\zeta_1]$, \eqref{unz} follows directly from \eqref{ula}. If this is not the case we have to apply \eqref{ula} twice. Let $0<\epsilon<\min\{\zeta_1,\zeta_1'\}$. We apply \eqref{ula} with $q=\zeta_1-\epsilon$
obtaining $g(\zeta_1)g(\zeta_2)b(\zeta_1,\zeta_2)=g(\zeta_2+\zeta_1-\epsilon)g(\epsilon)b(\zeta_2+\zeta_1-\epsilon,\epsilon)$. We apply now \eqref{ula} once again with $\zeta_2+\zeta_1-\epsilon>q=\zeta_1'-\epsilon>0$ obtaining
$g(\zeta_2+\zeta_1-\epsilon)g(\epsilon)b(\zeta_2+\zeta_1-\epsilon,\epsilon)=g(\zeta_1')g(\zeta_2')b(\zeta_1',\zeta_2')$, which finishes the proof.
\end{proof}

Given a positive function $S$ we define
\begin{equation}\label{fm}
\hat S(\zeta):=\int_0^\zeta  j S(j) dj.
\end{equation}
The gradient condition in this case corresponds to having a function $H$ such that
\begin{equation}\label{sbaglirip}
b(\zeta_1,\zeta_2)\hat S(\zeta_1)-b(\zeta_2,\zeta_1)\hat S(\zeta_2)=H(\zeta_1)-H(\zeta_2)\,,
\end{equation}
which is a condition on the form of the antisymmetric part of the function $b\hat S$. The model is therefore gradient if and only if there exists a symmetric function $\mathbb S$ and a function $H$ such that
\begin{equation}\label{simmgra}
b(\zeta_1,\zeta_2)\hat S(\zeta_1)=\mathbb S(\zeta_1,\zeta_2)+\frac 12\left(H(\zeta_1)-H(\zeta_2)\right)\,.
\end{equation}
A less abstract characterization of the gradient condition is difficult in the general case; we consider a special case and discuss a few more cases in Example~\ref{specialI}.

\begin{lemma}\label{incl}
A model with rates given in \eqref{rateinclusion} with $b(\zeta_1,\zeta_2)=\ell(\zeta_1)L(\zeta_2)$, for smooth and positive functions $\ell, L$, is gradient if and only if one of the following two conditions are satisfied
\begin{equation}\label{piuk}
\left\{
\begin{array}{l}
L(\zeta)=c\ell(\zeta)\hat S(\zeta)+k\,,\\
\ell(\zeta)=c/\hat S(\zeta)\,,\\
\end{array}
\right.
\end{equation}
for arbitrary constants $c,k$ and $\hat S$ given in \eqref{fm}.
\end{lemma}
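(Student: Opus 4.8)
The plan is to turn the gradient condition \eqref{sbaglirip} into one scalar functional equation in the two variables $\zeta_1,\zeta_2$ and then differentiate to eliminate the unknown $H$. First I would insert the factorized form $b(\zeta_1,\zeta_2)=\ell(\zeta_1)L(\zeta_2)$ into \eqref{sbaglirip} and set $F(\zeta):=\ell(\zeta)\hat S(\zeta)$, with $\hat S$ as in \eqref{fm}; the condition becomes
\begin{equation}\label{eq:FL}
F(\zeta_1)L(\zeta_2)-F(\zeta_2)L(\zeta_1)=H(\zeta_1)-H(\zeta_2),\qquad \zeta_1,\zeta_2\ge 0,
\end{equation}
for some real function $H$. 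Since $\hat S$ is a primitive of $\zeta S(\zeta)$ and $\ell,L$ are smooth, $F$ is smooth, and freezing $\zeta_2$ in \eqref{eq:FL} shows that $H$ is smooth too; hence all the derivatives used below exist.

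For the sufficiency of \eqref{piuk} I would just substitute. Writing the first alternative as $L=cF+k$, the cross terms $cF(\zeta_1)F(\zeta_2)$ in the left-hand side of \eqref{eq:FL} cancel and one is left with $k\bigl(F(\zeta_1)-F(\zeta_2)\bigr)$, so \eqref{eq:FL} holds with $H=kF$. For the second alternative $\ell=c/\hat S$ one has $F\equiv c$, the left-hand side collapses to $c\bigl(L(\zeta_2)-L(\zeta_1)\bigr)$, and \eqref{eq:FL} holds with $H=-cL$. Thus either branch of \eqref{piuk} makes the model gradient.

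For the necessity I would assume \eqref{eq:FL} and differentiate it once in $\zeta_1$ and once in $\zeta_2$ to kill the additive term $H$, obtaining
\begin{equation}\label{eq:FLmixed}
F'(\zeta_1)L'(\zeta_2)=F'(\zeta_2)L'(\zeta_1),\qquad \zeta_1,\zeta_2\ge 0.
\end{equation}
This says that the planar curve $\zeta\mapsto(F'(\zeta),L'(\zeta))$ lies on a fixed line through the origin, so there are two cases. If $L'\equiv 0$ then $L$ is constant, which is \eqref{piuk} with $c=0$. Otherwise fix $\zeta_0$ with $L'(\zeta_0)\neq0$; putting $\zeta_2=\zeta_0$ in \eqref{eq:FLmixed} gives $F'=cL'$ with $c:=F'(\zeta_0)/L'(\zeta_0)$, hence $F=cL+d$ for a constant $d$. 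If $c\neq0$ this is $L=\tfrac{1}{c}\,\ell\hat S-\tfrac{d}{c}$, the first alternative of \eqref{piuk}; if $c=0$ then $\ell\hat S\equiv d$, i.e.\ $\ell=d/\hat S$, the second alternative. This covers all possibilities and closes the equivalence.

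The only genuine step is the passage from \eqref{eq:FL} to \eqref{eq:FLmixed}: differentiating twice to remove $H$ and thereby forcing $F'$ and $L'$ to be proportional. After that the case analysis and the two sufficiency checks are immediate algebraic cancellations, and the positivity of $\ell,L$ plays no role beyond guaranteeing that the constants are admissible.
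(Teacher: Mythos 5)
Your proof is correct and takes essentially the same approach as the paper: insert the factorized form into \eqref{sbaglirip}, set $\hat\ell:=\ell\hat S$ (your $F$), take the mixed derivative $\frac{\partial^2}{\partial\zeta_1\partial\zeta_2}$ to eliminate $H$ and obtain $\hat\ell'(\zeta_1)L'(\zeta_2)=\hat\ell'(\zeta_2)L'(\zeta_1)$, then conclude by the same two-case analysis with sufficiency checked by direct substitution. Your explicit observation that $H$ inherits smoothness from the functional equation (by freezing $\zeta_2$) is a small point the paper leaves implicit, but otherwise the argument is the same.
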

\begin{proof}
The gradient condition  \eqref{grad} is satisfied if there exists a real function $H$ such that \eqref{sbaglirip} is satisfied with $b(\zeta_1,\zeta_2)=\ell(\zeta_1)L(\zeta_2)$. Calling $\hat\ell:=\ell \hat S$
and taking the derivatives $\frac{\partial^2}{\partial\zeta_1\partial\zeta_2}$ on both sides of \eqref{sbaglirip} we obtain
that we need to have $\hat{\ell}'(\zeta_1)L'(\zeta_2)=\hat{\ell}'(\zeta_2)L'(\zeta_1)$. This relation can be satisfied either if $\hat{\ell}'=0$, which gives the second condition in \eqref{piuk} or if $L'/\hat \ell '$ is constant, which is the first condition in \eqref{piuk}. Hence we proved that \eqref{piuk} are necessary conditions; in order to see that they are also sufficient it is enough to insert the special forms in \eqref{sbaglirip}.
\end{proof}

If we combine the results of lemmas \ref{inclgrad} and \ref{incl} we obtain models that are at the same time gradient and reversible. This is possible when $C(\zeta_1+\zeta_2)$ is constant and $\ell$ and $L$ are proportional. We have two possibilities; when the first condition in \eqref{piuk} is satisfied then we have
that the rates are given by
\begin{equation}\label{RGPR0}
Q(\zeta_1,\zeta_2, q)= \frac{k S(q)}{\left[1-c\hat S(\zeta_1)\right]\left[1-c\hat S(\zeta_2)\right]}\,,
\end{equation}
for suitable constants $c,k$ and a function $S$;
the corresponding $g$ is $g(\zeta)=1-cC\hat S(\zeta)$.
If instead the second condition in \eqref{piuk} is satisfied  we have
that the rates are given by
\begin{equation}\label{RGPR}
Q(\zeta_1,\zeta_2, q)=\frac{k S(q)}{\hat S(\zeta_1) \hat S(\zeta_2)}\,,
\end{equation}
for a suitable constant $k$ and a function $S$; the corresponding $g$ is given by $g(\zeta)=\hat S(\zeta)$.

\end{example}

\begin{example}[Discrete generalized misanthrope processes]\label{inclusiond}
We consider here the discrete version of the previous example which will give as special cases the exclusion and the inclusion processes. Mass is assuming only integer values and we do not discuss the cases when the number of particles on each site is bounded. Some computations are similar to the previous case but in the discrete setting it is simpler and natural to consider constraints on the amount of mass flowing. More precisely the rates that we consider are again of the form \eqref{rateinclusion} with $\zeta_1,\zeta_2, q$ that assume only integer values. We will however distinguish the following cases:
\begin{enumerate}
\item $S(q)>0$ for all $q$  (case (A)),
\item $S(1)=1$ and $S(q)=0$ for any $q\geq 2$ (case (B)).
\end{enumerate}
Note that the case (B) corresponds to the situation when only one particle can jump in a single step (as in the classic models). The rates of the case (B) correspond to the special models introduced in \cite{CC} under the name of \emph{Misanthropic processes}. In particular in \cite{CC} a collection of discrete equations characterizing reversibility have been introduced. We will give a complete solution to such conditions extending the result in \cite{CC} which, to the best of our knowledge,
is new.

Here, again, for the gradient condition we are not aiming to a complete characterization but we restrict to the same special form of Lemma \ref{incl}. As in the continuous case we set $\hat S(\zeta):=\sum_{j=1}^\zeta j S(j)$.
\begin{lemma}
Consider discrete models with rates as in \eqref{rateinclusion} with $b(\zeta_1,\zeta_2)=\ell(\zeta_1)L(\zeta_2)$ and $\zeta_1, \zeta_2$ assuming integer values only.
In the case (A) a model is gradient if and only if the following condition holds
\begin{equation}\label{piuk2}
L(\zeta)=c\ell(\zeta)\hat S(\zeta)+k\,,
\end{equation}
for arbitrary constants $c,k$. In the case (B) a model is gradient if and only if the following condition holds
\begin{equation}\label{piuk2B}
L(\zeta)=c\ell(\zeta)+k\,,
\end{equation}
for arbitrary constants $c,k$.
\end{lemma}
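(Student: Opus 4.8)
The plan is to reduce both cases to the same elementary computation on the gradient identity \eqref{sbaglirip}, using a discrete mixed difference in place of the mixed second derivative from the proof of Lemma~\ref{incl}. Substituting $b(\zeta_1,\zeta_2)=\ell(\zeta_1)L(\zeta_2)$ and writing $\hat\ell:=\ell\hat S$ as there, condition \eqref{sbaglirip} becomes
\[
\hat\ell(\zeta_1)L(\zeta_2)-\hat\ell(\zeta_2)L(\zeta_1)=H(\zeta_1)-H(\zeta_2).
\]
First I would introduce the forward difference $\Delta f(\zeta):=f(\zeta+1)-f(\zeta)$, let $\Delta_i$ denote its action on the $i$-th argument, and apply $\Delta_1\Delta_2$ to both sides. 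The right-hand side, a difference of a function of $\zeta_1$ and a function of $\zeta_2$, is annihilated, while the left-hand side collapses to
\[
\Delta\hat\ell(\zeta_1)\,\Delta L(\zeta_2)=\Delta L(\zeta_1)\,\Delta\hat\ell(\zeta_2)\qquad\text{for all }\zeta_1,\zeta_2,
\]
the discrete analogue of the relation $\hat\ell'(\zeta_1)L'(\zeta_2)=\hat\ell'(\zeta_2)L'(\zeta_1)$ obtained in Lemma~\ref{incl}. This says precisely that the vectors $(\Delta\hat\ell(\zeta),\Delta L(\zeta))$ are pairwise proportional.

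The decisive input is the boundary behaviour of $\hat S$. In both cases $\hat S(0)=0$, hence $\hat\ell(0)=0$; moreover $\hat S(1)=S(1)>0$, so $\Delta\hat\ell(0)=\hat\ell(1)=\ell(1)S(1)>0$. I would then fix $\zeta_2=0$ in the proportionality relation and, since $\Delta\hat\ell(0)\neq0$, solve for the single constant $c:=\Delta L(0)/\Delta\hat\ell(0)$, obtaining $\Delta L(\zeta)=c\,\Delta\hat\ell(\zeta)$ for every $\zeta$. Telescoping from $0$ and using $\hat\ell(0)=0$ gives $L(\zeta)=c\,\hat\ell(\zeta)+k$ with $k=L(0)$. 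In case (A) this is literally $L=c\ell\hat S+k$, i.e.\ \eqref{piuk2}. In case (B) one has $\hat S(\zeta)=1$ for $\zeta\ge1$, so $\hat\ell(\zeta)=\ell(\zeta)$ there and the same formula reads $L=c\ell+k$, i.e.\ \eqref{piuk2B}; here only single-particle jumps occur, so $\ell$ is relevant only for occupied sites $\zeta\ge1$ and the identification $\hat\ell=\ell$ on that range is all that is used. For the converse I would substitute the resulting form of $L$ back into the displayed identity: the quadratic cross terms cancel, leaving $H=k\hat\ell$ up to an additive constant, which confirms the gradient property and closes the equivalence.

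The step I expect to matter most is exactly the boundary analysis, since it is where the discrete statement departs from the continuous Lemma~\ref{incl}. There the proportionality admitted a second branch $\hat\ell'\equiv0$, i.e.\ $\hat\ell$ constant, producing the extra solution $\ell=c/\hat S$. In the discrete setting this branch is forbidden: a constant $\hat\ell$ would have to equal $\hat\ell(0)=0$ and thus force $\ell\equiv0$, contradicting $\ell>0$. Equivalently, $\Delta\hat\ell(0)>0$ eliminates the degenerate alternative once and for all, which is why each of cases (A) and (B) yields a single family rather than two. The only bookkeeping to watch is at $\zeta=0$ (where $\hat\ell(0)=0$ while $\ell(0)>0$ in case (B)), handled by noting that flows originate only from occupied sites, so the relations are genuinely imposed for $\zeta\ge1$ and the telescoping from the boundary value $\hat\ell(0)=0$ is legitimate.
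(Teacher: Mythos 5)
Your proof is correct, and it takes a route that differs from the paper's in how the unknown function $H$ is eliminated. You transplant the continuous argument of Lemma \ref{incl} to the discrete setting: the mixed forward difference $\Delta_1\Delta_2$ annihilates $H(\zeta_1)-H(\zeta_2)$ and yields the increment proportionality $\Delta\hat\ell(\zeta_1)\,\Delta L(\zeta_2)=\Delta L(\zeta_1)\,\Delta\hat\ell(\zeta_2)$, which you then integrate by telescoping from $\zeta=0$. The paper never differences: it sets $\zeta_2=0$ in \eqref{sbaglirip}, uses $\hat\ell(0)=0$ together with the normalization $H(0)=0$ to identify $H=L(0)\,\hat\ell$ explicitly, and then substitutes this back and evaluates at a single $\zeta_2^*$ with $\hat\ell(\zeta_2^*)>0$ to solve for $L$. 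Both arguments pivot on the same boundary fact $\hat\ell(0)=0$, and both give the same explanation for why the branch $\ell=c/\hat S$ of the continuous classification disappears in the discrete case. The trade-offs: your version handles (A) and (B) in one stroke, since $\hat\ell=\ell$ in case (B), and makes the analogy with Lemma \ref{incl} transparent; the paper's version needs only non-degeneracy (existence of some $\zeta_2^*$ with $\hat\ell(\zeta_2^*)>0$) and produces $H$ as a byproduct, whereas your choice of base point $\zeta_2=0$ in the proportionality relation uses $\Delta\hat\ell(0)=\ell(1)S(1)>0$, i.e. strict positivity of $\ell(1)$. This is harmless under the standing positivity assumptions, but if $\ell(1)$ were allowed to vanish you should instead anchor the proportionality at any $\zeta^*$ with $\Delta\hat\ell(\zeta^*)\neq 0$ (such a point exists unless $\hat\ell\equiv 0$, i.e. unless the model is trivial), after which the telescoping closes exactly as before. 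One small slip: you write $\ell(0)>0$ in case (B), whereas the paper's convention, used in its proof, is $\ell(0)=0$; nothing in your argument actually relies on this, since $\ell(0)$ is dynamically irrelevant there and your telescoping only uses $\hat\ell(0)=0$.
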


\begin{proof}
Consider first the case $(A)$. Condition \eqref{piuk2} is clearly sufficient. We show that it is necessary too.
Similarly to the continuous case the gradient condition is satisfied if there exists a real function $H$ that satisfies \eqref{sbaglirip}.
Let us call $\hat \ell(\zeta):=\ell(\zeta)\hat S(\zeta)$ and consider $\zeta_2=0$. Recall that in the discrete case we need to have $\ell(0)=0$ and moreover  we fix the arbitrary additive constant of the function $H$ in such a way that $H(0)=0$. We obtain that $H(\zeta_1)=\hat\ell(\zeta_1)L(0)$.
Replacing back to \eqref{sbaglirip} and computing the equality for a fixed $\zeta_2^*$ for which $\hat \ell(\zeta_2^*)>0$ (that must exist otherwise the model is trivial) we obtain
\begin{equation*}
\hat \ell(\zeta_2^*)L(\zeta_1)=(L(\zeta_2^*)-L(0))\hat \ell(\zeta_1)+L(0)\hat\ell(\zeta_2^*)\,,
\end{equation*}
that is equivalent to \eqref{piuk2}.

\smallskip
For the case $(B)$ the gradient condition becomes
\begin{equation*}
\ell(\zeta_1)L(\zeta_2)-\ell(\zeta_2)L(\zeta_1)=H(\zeta_1)-H(\zeta_2)\,,
\end{equation*}
and condition \eqref{piuk2B} is obtained by the same argument as before with $\hat \ell$ replaced by $\ell$.
\end{proof}
Note that \eqref{piuk2B} coincides with \eqref{piuk2} since in the case (B) we have $\hat S(\zeta)=\id(\zeta >0)$. Note also that the second condition in \eqref{piuk} in the continuous case, cannot be satisfied in the discrete setting, since we have $\hat S(0)=0$ and $\ell(0)=0$ and the constant $c$ would be zero giving a trivial model.

We consider now reversibility.
\begin{lemma}\label{a+b}
In the case $(A)$ we have reversibility  with respect to a product measure \eqref{mis} if and only if $b(\zeta_1,\zeta_2)=\frac{C(\zeta_1+\zeta_2)}{g(\zeta_1)g(\zeta_2)}$ for an arbitrary positive function $C$. In the case $(B)$ we have reversibility if an only if
\begin{equation}\label{cocoz}
b(\zeta_1,\zeta_2)=B(\zeta_1)\phi(\zeta_1,\zeta_2),
\end{equation}
where $B$ is an arbitrary positive function and $\phi$ is a function that satisfies the symmetry
\begin{equation}\label{simcoc}
\phi(l,m)=\phi(m+1,l-1)\,, \qquad l,m \in \mathbb N\cup\{0\}; l\geq 1\,.
\end{equation}
In this case we have that the product invariant measure is characterized by the weights
\begin{equation}\label{ciaoroger}
g(k)=\left\{
\begin{array}{ll}
1 & k=0\,,\\
\prod_{j=1}^k B(j)^{-1} & k\geq 1\,.
\end{array}
\right.
\end{equation}
\end{lemma}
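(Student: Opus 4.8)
The plan is to specialize the detailed-balance condition \eqref{detbal2} to rates of the factorized form \eqref{rateinclusion}, $Q(\zeta_1,\zeta_2,q)=b(\zeta_1,\zeta_2)S(q)$, and to exploit that in the two cases a different set of values of $q$ produces nontrivial constraints. Written in terms of flows exactly as in \eqref{ula}, and cancelling $S(q)$ wherever it is strictly positive, reversibility reads
\[
g(\zeta_1)g(\zeta_2)b(\zeta_1,\zeta_2)=g(\zeta_2+q)g(\zeta_1-q)b(\zeta_2+q,\zeta_1-q),
\]
now for integer $q$ with $0\le q\le\zeta_1$, and only for those $q$ with $S(q)>0$.

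For case (A), where $S(q)>0$ for every $q$, this holds for all admissible integer $q$, and the argument is the discrete analogue of Lemma \ref{inclgrad}. I would set $F(\zeta_1,\zeta_2):=g(\zeta_1)g(\zeta_2)b(\zeta_1,\zeta_2)$, so that the relation says $F(\zeta_1,\zeta_2)=F(\zeta_2+q,\zeta_1-q)$. Taking directly $q=\zeta_1$ (admissible since $S(\zeta_1)>0$) gives $F(\zeta_1,\zeta_2)=F(\zeta_1+\zeta_2,0)$, so $F$ is constant on each diagonal $\zeta_1+\zeta_2=\mathrm{const}$; hence $F=C(\zeta_1+\zeta_2)$ for a positive $C$, which is the claimed form, and the converse follows by substitution. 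The discrete setting is here in fact simpler than the continuous one, since one may jump straight to the boundary of the diagonal in a single step rather than composing two applications of \eqref{ula}.

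For case (B) only $q=1$ gives a constraint, because $q\ge2$ makes both sides vanish ($S(q)=0$); thus reversibility is equivalent to
\[
g(l)g(m)b(l,m)=g(l-1)g(m+1)b(m+1,l-1),\qquad l\ge1,\ m\ge0.
\]
The key step is to introduce $B(j):=g(j-1)/g(j)$ for $j\ge1$, which is positive since $g>0$, and to divide the displayed identity by $g(l-1)g(m)$; using $g(l)/g(l-1)=1/B(l)$ and $g(m+1)/g(m)=1/B(m+1)$ this becomes $b(l,m)/B(l)=b(m+1,l-1)/B(m+1)$. Setting $\phi(l,m):=b(l,m)/B(l)$ turns reversibility into precisely the shifted symmetry $\phi(l,m)=\phi(m+1,l-1)$ of \eqref{simcoc}, together with the factorization \eqref{cocoz}. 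For the forward implication I read $B$ and $\phi$ off a given reversible $g$; for the converse I define $g$ by \eqref{ciaoroger}, check by telescoping that $g(l-1)/g(l)=B(l)$ so that this $B$ is the one appearing in \eqref{cocoz}, and run the same computation backwards to recover detailed balance. Normalizing $g(0)=1$ fixes the arbitrary multiplicative constant and yields \eqref{ciaoroger}.

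The computation is essentially algebraic, so the main things to watch are the bookkeeping of the index shifts $m\mapsto m+1$, $l\mapsto l-1$ and the range $l\ge1$, $m\ge0$ on which the constraint is active; I would also note that $b(0,\cdot)$ never enters the dynamics (no mass can leave an empty site), so the factorization is required only for $\zeta_1\ge1$, and that strict positivity of $b$ and $S$ legitimizes all divisions above. The genuinely new content relative to Lemma \ref{inclgrad} is case (B): there the single-particle constraint does \emph{not} force $F$ to depend only on the sum, and the correct invariant object is instead the shifted symmetry \eqref{simcoc}, which is exactly the discrete solvability condition of \cite{CC}.
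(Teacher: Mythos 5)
Your proof is correct. For case (B) it coincides with the paper's argument step for step: the paper also reduces detailed balance to the single constraint coming from $q=1$, introduces $B(l):=g(l-1)/g(l)$ for $l\geq 1$, rewrites detailed balance as $b(\zeta_1,\zeta_2)/B(\zeta_1)=b(\zeta_2+1,\zeta_1-1)/B(\zeta_2+1)$, and identifies $\phi=b/B$, so that necessity is exactly \eqref{cocoz}--\eqref{simcoc}; sufficiency is the same telescoping verification using $g(\zeta+1)/g(\zeta)=1/B(\zeta+1)$ and the normalization $g(0)=1$ giving \eqref{ciaoroger}.

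The only genuine difference is case (A). The paper dismisses it in one sentence as ``a discrete version of the argument in the continuous case'' (Lemma \ref{inclgrad}), whose continuous proof applies \eqref{ula} twice through an intermediate configuration $(\zeta_1+\zeta_2-\epsilon,\epsilon)$ with $0<\epsilon<\min\{\zeta_1,\zeta_1'\}$. You instead take the extreme flow $q=\zeta_1$ in a single step, obtaining for $F(\zeta_1,\zeta_2):=g(\zeta_1)g(\zeta_2)b(\zeta_1,\zeta_2)$ the identity $F(\zeta_1,\zeta_2)=F(\zeta_1+\zeta_2,0)$, so $F$ depends only on the sum. This is legitimate in the discrete setting: in case (A) the endpoint $q=\zeta_1$ is an honest transition with $S(\zeta_1)>0$, and $b(\cdot,0)$ (mass arriving at an empty site) is covered by the standing strict positivity of $b$; your observation that only $b(0,\cdot)$ is dynamically irrelevant is the right caveat. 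Your one-step version is in fact the cleaner discrete transcription, since a literal copy of the epsilon argument would require an integer $\epsilon$ with $0<\epsilon<\min\{\zeta_1,\zeta_1'\}$, which fails when that minimum equals $1$, whereas the endpoint evaluation you use is only problematic in the continuous case (where pointwise values of densities at $0$ may be ill-behaved, e.g.\ $g$ in \eqref{g2} for $M>1$). So: same proof for (B), and a mild but genuine streamlining for (A).
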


\begin{proof}
The proof of the case (A) is obtained with a discrete version of the argument in the continuous case.

We consider now the case (B). We first check that if \eqref{cocoz}, \eqref{simcoc} and \eqref{ciaoroger} are satisfied then the detailed balance holds. We have to check that for any $\zeta_1\geq 1$ and $\zeta_2 \geq 0$ we have
$$
g(\zeta_1)g(\zeta_2)B(\zeta_1)\phi(\zeta_1,\zeta_2)=g(\zeta_2+1)g(\zeta_1-1)B(\zeta_2+1)\phi(\zeta_2+1,\zeta_1-1)\,.
$$
Due to \eqref{simcoc} the factors $\phi$ simplify and the equality follows since $\frac{g(\zeta+1)}{g(\zeta)}=\frac{1}{B(\zeta+1)}$.

We verify now that \eqref{cocoz}, \eqref{simcoc} and \eqref{ciaoroger} are also necessary conditions to have detailed balance, i.e., for the validity of
$$
g(\zeta_1)g(\zeta_2)b(\zeta_1,\zeta_2)=g(\zeta_2+1)g(\zeta_1-1)b(\zeta_2+1,\zeta_1-1)\,,
$$
for a suitable positive function $g$. We introduce the function $B(l):=\frac{g(l-1)}{g(l)}$ for $l\geq 1$ and $B(0)=0$. Using this function the detailed balance becomes
$$
\frac{b(\zeta_1,\zeta_2)}{B(\zeta_1)}=\frac{b(\zeta_2+1,\zeta_1-1)}{B(\zeta_2+1)}\,,
$$
which is exactly \eqref{simcoc} with $\phi=\frac{b}{B}$.

\end{proof}

\begin{remark}
From the above analysis it follows that the whole family of reversible models of case (B) is parametrized by the arbitrary positive function $B$
and the function $\psi$ satisfying \eqref{simcoc}. It is rather simple to construct a function satisfying \eqref{simcoc}; referring to Figure \ref{lafigura} we can fix arbitrarily the values of the function on the
points with integer coordinates that are marked with a round black dots, the values on the points marked with black squares are then fixed by the value on the paired point with a black dot. This is because the symmetry in \eqref{simcoc} is an involution.
\end{remark}
 \begin{figure}\label{lafigura}
	\centering
	\includegraphics{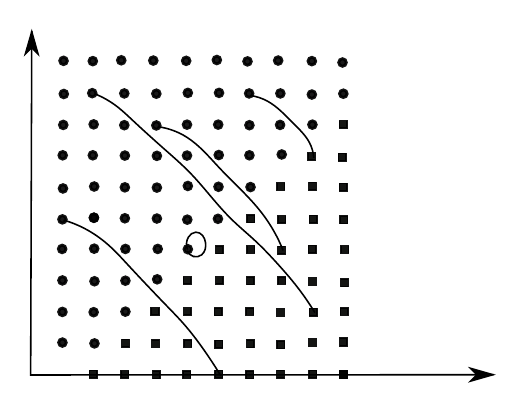}
	\caption{Integer points are marked either with a circle or with a square. On the points with a circle the value of the function $\psi$ can be defined arbitrarily; on the points with a square the values should be set equal to the corresponding point with a circle. We draw some lines among points that are in correspondence. Points that are exactly on the line just below the main diagonal are mapped to themself and the function $\psi$ can be defined arbitrarily there.}
\end{figure}
\begin{remark}
A particularly interesting example of reversible rates of case (B) is given by $b(\zeta_1,\zeta_2)=\ell(\zeta_1)L(\zeta_2)$ where $\ell, L$ are two arbitrary positive measures (with $\ell(0)=0$). In that case we have $B(l)=\frac{\ell(l)}{L(l-1)}$ and $\phi(l,m)=L(l-1)L(m)$.
Remarkable examples of this form are the exclusion process that corresponds to $\ell(0)=1-\ell(1)=0$
and $L(1)=1-L(0)=0$; the inclusion process that corresponds to $\ell(\zeta)=\zeta$ and $L(\zeta)=m+\zeta$, and finally when $L=1$ we have the classic zero range dynamics.

Another special case discussed in \cite{CC} is given by $b(\zeta_1,\zeta_2)=L(\zeta_1)L(\zeta_2)C(\zeta_1+\zeta_2)$ for arbitrary positive functions $L,C$ (with $L(0)=0$). In this case we have indeed $B(l)=\frac{L(l)}{L(l-1)}$ and $\phi(l,m)=L(l-1)L(m)C(l+m)$ and the validity of \eqref{simcoc} is easily verified.

Finally we consider some general rates of the form $\ell(l)L(m)\tilde\phi(l,m)$ with $\ell, L$ being arbitrary positive functions (again $\ell(0)=0$) and $\tilde \phi$ is a function satisfying \eqref{simcoc}. Every rate of this form is reversible but this fact it is not enlarging the class individuated by Lemma \ref{a+b} since every rate of this form can be written as \eqref{cocoz} with $\psi$ satisfying \eqref{simcoc} by considering $B(l)=\frac{\ell(l)}{\ell(l-1)}$ and $\phi(l,m)=\ell(l-1)L(m)\tilde\phi(l,m)$.
\end{remark}

\begin{remark}
In \cite{CC} the author considers models of case (B) with in addition a spatial structure. In the case of nearest-neighbor jumps with symmetric spatial rates, the author's condition for the reversibility with respect to a product measure (which is necessary and sufficient in dimension one) is given by
\begin{equation}\label{equanozza}
\frac{b(i,j)}{b(j+1,i-1)}=\frac{b(i,0)b(1,j)}{b(j+1,0)b(1,i-1)}\,, i\geq 1, j\geq 0\,.
\end{equation}
It is easy to check that our general solution \eqref{cocoz}, \eqref{simcoc}, \eqref{ciaoroger} is the general solution to \eqref{equanozza}. This was not obtained in \cite{CC} and we could not find this fact in more recent references either.

It is interesting to observe that in the case of non-symmetry in space of the rate of jump, the condition that has to be added in \cite{CC} to \eqref{equanozza} is
\begin{equation}
b(i,j)-b(j,i)=b(i,0)-b(j,0)\,,
\end{equation}
that is indeed equivalent to the gradient condition for the models of case (B). This is a special case of a general fact that gradient and reversible models preserve their invariant measure also in presence of an external field that creates an asymmetry on the rates (see \cite{BDGJLstat} section 2.5).
\end{remark}

If we want to construct discrete models that are at the same time gradient and reversible then, restricting to the case when $b=\ell L$, in the case $(A)$ we have exactly the same formulas of the continuous models \eqref{RGPR0}, while in the case $(B)$ we obtain the whole class of models with the only constraints that $\ell$ and $L$ should be given by \eqref{piuk2B} and the function $g$ by \eqref{ciaoroger}.

\end{example}

\begin{example}[Special identities]\label{specialI}
	Another simple and natural class of models is obtained by selecting the function $\psi$ (introduced in Lemma \ref{Rtilde}) in the form
	$\psi(\zeta,j)=g(\zeta_1)g(\zeta_2)M(\zeta_1+\zeta_2)S(j)$ (note that is has the same form with the misanthrope model). In the special case $g=1$, condition \eqref{condpsi} is satisfied finding a triple of functions such that
	\begin{equation}\label{identita}
	M(\zeta_1+\zeta_2)\Big(\hat S(\zeta_1)-\hat S(\zeta_2)\Big)=H(\zeta_1)-H(\zeta_2)\,,
	\end{equation}
	where $\hat S(y)=\int_0^yS(x)xdx$.
	The KMP model is of this type choosing $M(s)=1/s$ and $S(j)=2j$, in which case the identity \eqref{identita} becomes
	$$
	\frac{\zeta_1^2-\zeta_2^2}{\zeta_1+\zeta_2}=\zeta_1-\zeta_2\,.
	$$
	We can generate other gradient and reversible models finding functions satisfying relation \eqref{identita}. Observing that
	$$
	\left(e^{-\zeta_2}-e^{-\zeta_1}\right)e^{\zeta_1+\zeta_2}=e^{\zeta_1}-e^{\zeta_2},
	$$
	we deduce that we obtain such a model choosing $S(j)=\frac{e^{-j}}{j}$, $M(s)=e^s$ and $H(x)=e^x$. The corresponding rates are therefore
	\begin{equation}\label{studpap}
	R(\zeta, j)= R(\zeta, -j)= e^{\zeta_1+\zeta_2}\frac{e^{-j}}{j}\,, \qquad j\geq 0\,,
	\end{equation}
	which indeed is also a special case of the models discussed in Example \ref{inclusion}. Hence, we just showed how to reach a similar characterization by alternative paths.
\end{example}

\begin{remark}
Observe that the example with rates \eqref{studpap} gives $\int_{-\zeta_2}^{\zeta_1}R(\zeta,j)dj=+\infty$. This happens also in models  like in formula \eqref{evviva} for example with $h=1$ and $A=0$. Models of this form have an infinite rate of jump across each edge and have therefore a dense set of jump times. The process is well defined since with very high rate we observe very small currents.  We are in presence of a class of models that can be imagined as interacting Levy processes of pure jump type. A fully mathematical discussion of such models is beyond the aim of this paper.
See also \cite{FFG} where models of this type were considered simultaneously to us.
\end{remark}

\section{Transport coefficients}\label{sec:TC}

Here we give a short derivation of the connection of the transport coefficients to the details of the microscopic dynamics. The fluctuations at the level of large deviations for diffusive models are determined by the transport coefficients alone, i.e., the diffusion coefficient and the mobility \cite{mft}. We assume that the dynamics is gradient \eqref{grcond-giu} and reversible with respect to a product measure with marginals $g_\lambda$. We consider in this section the lattice $\Lambda_N=\frac 1N\left(\mathbb Z^d/(N\mathbb Z)^d\right)$, i.e. the $d$-dimensional discrete torus with edges among the nearest neighbor lattices sites
and mesh equal to $1/N$. We consider the discrete torus $\Lambda_N$ embedded in the continuous torus $\Lambda:=[0,1]^d$ with periodic boundary conditions. The analytic form of the transport coefficients can be deduced from the diffusive hydrodynamic scaling limit. A fully mathematical proof of the scaling limit and the corresponding large deviations for our models is a difficult and technical task and we give here a heuristic argument that gives the analytic formulas for the diffusion matrix and for the mobility. Since we consider isotropic models mobility and diffusion matrix will be multiples of the identity matrix and we can just consider the proportionality factor that we call diffusion coefficient and mobility.

\subsection{Hydrodynamic limit}\label{HL}

We consider the class of models above discussed in presence of a weak external field.
When we switch on a weak external field we have that the rates of jump are perturbed. Let $F$ be a smooth vector field on $\Lambda$ and let $F_N$ be its discretized version defined as
\begin{equation}\label{discvf}
F_N(x,y):=\int_{(x,y)}F\cdot dl\,,\qquad (x,y)\in E_N\,,
\end{equation}
that is the line integral along the oriented segment $(x,y)$. Note that $F_N(x,y)=O(1/N)$, since our lattice has mesh $1/N$, and this fact is the motivation on the name weak external field. The perturbed rates are given by
\begin{equation}\label{RF}
R^{F}(\eta(x),\eta(y),j)=R(\eta(x),\eta(y),j)e^{F_N(x,y)j}
\end{equation}
and consequently the instantaneous current is given by
\begin{eqnarray}\label{instcur}
j_\eta^F(x,y)&=&\int_{-\eta(y)}^{\eta(x)}R(\eta(x),\eta(y),j)e^{F_N(x,y)j}jdj \\
&=&j_\eta(x,y)+F_N(x,y)\int_{-\eta(y)}^{\eta(x)}R(\eta(x),\eta(y),j)j^2dj+O(1/N^2)\,.
\end{eqnarray}
The motivation behind the choice \eqref{RF} of the rates is that the perturbed part is related to the work of the vector field $F_N$ with a flow of mass $j$.

We consider the diffusive scaling limit of the process under a weakly asymmetric external field $F$. The diffusive rescaling consists in multiplying the rates by $N^2$ and in multiplying the mass on each site by a factor of $N^{-d}$; consequently the instantaneous current is multiplied by a factor of $N^{2-d}$.

The empirical measure is an element of $\mathcal M^+(\Lambda)$, the positive measures on $\Lambda$, associated to a configuration $\zeta$ and defined by
\begin{equation}
\pi_N(\zeta):=\frac{1}{N^d}\sum_{x\in \Lambda_N}\zeta(x)\delta_x\,,
\end{equation}
where $\delta_x$ denotes the delta measure at $x\in \Lambda$.
Consider an initial state $\eta$ associated to a given measurable profile $\rho_0(x)$, i.e., such that
$\lim_{N\to +\infty} \pi_N(\eta)= \rho_0(x)dx$,
where the convergence is in the weak sense of measures, i.e.,
for any continuous function $f$ we have
\begin{equation}\label{initial}
\lim_{N\to +\infty} \int_{\Lambda} f d\pi_N(\eta)=\lim_{N\to +\infty}\frac{1}{N^d}\sum_{x\in \Lambda_N}\eta(x)f(x)= \int_\Lambda\rho_0(x)f(x)dx\,.
\end{equation}
In order to determine the hydrodynamic behavior of the current under this rescaling, we can consider the scalar product of the current observed on the lattice with a discretized vector field. Consider $G$ a smooth vector field and let $G_N$ be its discretized version as defined by the
definition in \eqref{discvf}.

Recall the definition \eqref{defJ} of the net current across the edges that define a discrete vector field on the lattice for each time $t$ and call $J^F$ the current for the accelerated process associated to the external field $F$. The scalar product of the net current $J^F$ with the discretized vector field $G_N$ is given by
\begin{equation}\label{original}
\frac{1}{2N^d}\sum_{(x,y)\in E_N}J^F(x,y,t)G_N(x,y)\,.
\end{equation}
We argue now that \eqref{original} is converging for large $N$ to $\int_0^tds \int_\Lambda dx\, J(\rho(x,s))\cdot G(x,s)$, where $J(\rho)$ is the typical current in the scaling limit associated to the density profile $\rho$.
By \eqref{defmart} (with $j^F$ given in \eqref{instcur}), up to a negligible martingale, formula \eqref{original} coincides with
\begin{equation}\label{uno}
\frac{N^{2-d}}{2}\sum_{(x,y)\in E_N}\int_0^tj_{\eta_s}^F(x,y)G_N(x,y)ds\,,
\end{equation}
where we recall that the power of the $N$ factor in front is due to the acceleration by $N^2$ of the rates.
The $1/2$ factor in \eqref{original} and \eqref{uno} is due to the fact that we are summing twice every unoriented edge.
Under general assumptions that we do not discuss, the martingale term is negligible by Doob's inequality.
Using the gradient condition \eqref{grcond-giu} and some discrete integrations by parts, formula \eqref{uno} can be written as
\begin{align}\label{al}
& \frac{1}{N^d}\sum_{x\in \Lambda_N}\int_0^t\tau_x h(\eta_s)N^2\nabla \cdot G_N(x)ds\nonumber \\
&+\frac{1}{2N^d}\sum_{(x,y)\in E_N}\int_0^tNF_N(x,y)NG_N(x,y)M_2(\eta_s(x),\eta_s(y))ds+ O(1/N)\,.
\end{align}
In the above formula we used the discrete divergence of the vector field $G_N$ defined by
\begin{equation}
\nabla\cdot G_N(x):=\sum_{(x,y)\in E_N}G_N(x,y)\,,
\end{equation}
for which it is possible to check that $N^2\nabla\cdot G_N(x)=\nabla \cdot G(x)+O(1/N)$ where the infinitesimal term is uniform in $x$ ($\nabla \cdot$ is the discrete divergence on the left hand side and the continuous one on the right hand side). We also used the notation
\begin{equation}
M_2(\eta(x),\eta(y)):= \int_{-\eta(y)}^{\eta(x)}R(\eta(x),\eta(y),j)j^2dj\,,
\end{equation}
for the second moment of the measure $R$. We remark that the terms $NF_N(x,x+e_i)$ and $NG_N(x,x+e_i)$, where $e_i$ is the vector with all coordinates equal to zero apart coordinate $i$ that is equal to $1/N$, are equal to $F_i(x)$ and $G_i(x)$ (which are the $i$-th coordinates of the smooth vector fields $F$ and $G$) up to a uniform infinitesimal term.

\smallskip

We define
\begin{equation}\label{trc}
\left\{
\begin{array}{l}
\hat D(\rho):=\mathbb E_{g_{\lambda(\rho)}}(h(\eta))\,,\\
\sigma(\rho):=\frac{1}{2}\mathbb E_{g_{\lambda(\rho)}\otimes g_{\lambda(\rho)} }(M_2(\eta(x),\eta(y)))\,,
\end{array}
\right.
\end{equation}
where $\lambda(\rho)$ is the inverse function of \eqref{robeta} and $g_\lambda\otimes g_\lambda$ is the product measure of two copies of $g_\lambda$. The factor $1/2$ in the second formula is present just to be consistent with the classic formulas.

The scaling limit for diffusive models is based on a replacement lemma (see \cite{KL, Spohn} for technical details) whose proof for our class of models may not be trivial. We just derive heuristically the scaling limit. The informal statement of the replacement lemma is as follows. Let $a:\Lambda_N\to \mathbb R$ be a local function, and let $A(\rho):= \mathbb E_{\otimes g_{\lambda(\rho)}}(a(\eta))$ be the average with respect to a product measure with marginals $g_{\lambda(\rho)}$. Consider an initial condition such that \eqref{initial} is satisfied. We have that for any $\epsilon >0$ and any smooth test function $\phi:\Lambda \to \mathbb R$ there exists a density $\rho_t(x)$ such that
\begin{equation}\label{replacement}
\mathbb P\left(\left|\frac{1}{N^d}\sum_{x\in \Lambda_N}\tau_xa(\eta_t)\phi(x)-\int_{\Lambda}A(\rho_t(x))\phi(x)dx\right|>\epsilon\right)
\end{equation}
is converging to zero when $N\to +\infty$. Applying \eqref{replacement} to formula \eqref{al} we obtain that the random expression is converging with high probability to the deterministic limit
\begin{equation}\label{diventabestia}
\int_0^tds\,\int_{\Lambda} \left[\hat D(\rho_s(x))\nabla \cdot G(x)+2\sigma(\rho_s(x)) F(x)\cdot  G(x)\right] dx\,,
\end{equation}
for any smooth test vector field $G$.
After an integration by parts, formula \eqref{diventabestia} coincides with
\begin{equation}\label{utile}
\int_0^tds\int_{\Lambda} \left[ -D(\rho_s(x))\nabla\rho_s(x) +2\sigma(\rho_s(x)) F(x)\right] \cdot  G(x)dx=\int_0^tds\int_{\Lambda}J(\rho_s(x))\cdot G(x)dx,
\end{equation}
where
\begin{equation}
J(\rho_t(x)):=-D(\rho_t(x))\nabla\rho_t(x) +2\sigma(\rho_t(x)) F(x)
\end{equation}
is the typical current associated to the density $\rho$ and
\begin{equation}\label{D}
D(\rho):=\frac{d}{d\rho}\hat D(\rho).
\end{equation}
It remains to characterize the density profile $\rho_t$. Since our stochastic system conserves the mass its limiting equation is therefore a continuity equation. By \eqref{utile} we identified the typical current in terms of $\rho_t$. We deduce that the limiting hydrodynamic equation for the density $\rho_t$ is given by the following nonlinear diffusion equation
\begin{equation}
\partial_t\rho=-\nabla\cdot J(\rho)=\nabla \cdot\left[D(\rho)\nabla\rho -2\sigma(\rho) F\right]\,.
\end{equation}
We have therefore that the diffusion matrix is $D(\rho)\id$ and the mobility is $\sigma(\rho)\id$.

\subsection{Free energy}

The density of the free energy for a model with invariant measure given by a product measure with marginals $g_\lambda$ is related to the large deviations rate functional for the empirical measure. The pressure $\mathcal P_\lambda$ associated to the empirical measure when distributed according to the product distribution $\otimes_{x\in \Lambda_N} g_\lambda(d\eta(x))$ is defined as follows. Let $f:\Lambda\to \mathbb R$ be a continuous function, we have
\begin{align}
\mathcal P_\lambda(f)&:=\lim_{N\to +\infty} \frac {1}{N^d}\log \mathbb E_{\otimes g_\lambda}\left(e^{N^d\int d\pi_N(\eta)f}\right) \nonumber \\
&=\int_{\Lambda}\log\left(\frac{Z(\lambda +f(x))}{Z(\lambda)}\right)dx\,.
\end{align}
The above exact computation can be easily done since the measure is product.
The corresponding large deviations rate functional is the Legendre transform
$I_\lambda(\rho)=\sup_f\left\{\int fd\rho-\mathcal P_\lambda(f)\right\}$, which is $+\infty$ unless $\rho$ is an absolutely continuous element of $\mathcal M^+(\Lambda)$ and for absolutely continuous measures assumes the value
\begin{equation}\label{rfem}
I_\lambda(\rho)=\int_{\Lambda}dx \left\{\rho(x)\big[\lambda(\rho(x))-\lambda\big]-\log\frac{Z(\lambda(\rho(x)))}{Z(\lambda)}\right\}\,.
\end{equation}
In formula \eqref{rfem} we still denote by $\rho$ the density of the measure $\rho$ and $\lambda(\rho)$
is the inverse function of \eqref{robeta}.
The density of free energy is therefore
\begin{equation}
f(\rho)=\rho \lambda(\rho)-\log Z(\lambda(\rho))\,,
\end{equation}
and \eqref{rfem} can be therefore written in the classic equilibrium form
\begin{equation}
I_\lambda(\rho)=\int_{\Lambda}dx\big[f(\rho(x))-f(\rho(\lambda))-f'(\rho(\lambda))\big(\rho(x)-\rho(\lambda)\big)\big]\,,
\end{equation}
were we used that $f'(\rho)=\lambda(\rho)$, that can be deduce using \eqref{robeta}.
The second derivative is given by $f''(\rho)=\lambda'(\rho)$ and, for stochastic particle systems, it is involved in the Einstein relation which is given by
\begin{equation}\label{ein}
D(\rho)=\sigma(\rho)f''(\rho),
\end{equation}
where $D$ and $\sigma$ are the transport coefficients introduced in the previous section.
The factor $1/2$ in \eqref{trc} was introduced to avoid a factor here. A general proof of \eqref{ein} from formulas \eqref{trc} seems to be not trivial but we will show its validity for classes of models.

\subsection{Integrability}\label{sec:int}

A problem of interest, which is the central theme of this paper, is to study the fluctuations of diffusive systems in contact with boundary reservoirs. In this physical situation we have a stationary non equilibrium state characterized by the invariant measure of the system that in general has statistical properties very different from the states of equilibrium. A direct computation of the unique invariant measure in the non-equilibrium situation is very difficult even if in some cases very interesting and rich combinatorial constructions have been devised. In the special case when the direct approach is possible, typically one observes that in the stationary non-equilibrium case there are long range correlations that are in correspondence with a non local large deviations rate functional.
As a special case we have the classic zero range model. Differently from the other models we have that the invariant measure is always of product type and no long range correlations appear \cite{DMF,EH,mft}. This fact is true also for the Ginzburg Landau model \cite{GPV, Spohn}.

An indirect approach to the study of fluctuations for stationary non-equilibrium states is a dynamical one. This is the approach developed by the Macroscopic Fluctuation Theory \cite{mft}. According to this general theory the structure of the fluctuations of stationary non equilibrium states is determined just by the transport coefficients and some macroscopic details of the sources in contact with the boundary of the system. The fluctuations for the SNS are related to the so called \emph{Quasipotential} that is computed by a minimization over the dynamic large deviations rate functional under suitable constraints. As a consequence the quasipotential is a solution of a stationary infinite dimensional Hamilton-Jacobi equation.

There is a whole class of one dimensional models for which it is possible to obtain a closed explicit form of the large deviations rate functional for the invariant measure \cite{mft,BGL}.
The large deviations rate functional is in this case a non local functional depending on the solution of a nonlinear differential problem with boundary conditions. The form of the boundary conditions depends on the interaction with the boundary sources (see \cite{mft} for strong interaction but also more recently \cite{BL,DHS,LV} for weak interactions). To this class of models belong the exclusion process as well as the classic KMP process and its dual \cite{BGL}, but in general the solvability of the boundary driven rate functional is shared by any class of models having a constant diffusion coefficient and a second order polynomial as mobility \cite{BGL}, i.e., when we have
\begin{equation}\label{remkp2}
\left\{
\begin{array}{l}
D(\rho)=c\,,\\
\sigma(\rho)=\mathcal P_2(\rho)\,.
\end{array}
\right.
\end{equation}
The non local form of the rate functional in this case is given by formulas (7.9) and (7.10) in \cite{BGL}. It is important to observe that the form of the rate functional is explicit enough to deduce for example the correlation functions. See also \cite{G1,G2} for a detailed discussion of this with also some generalizations.

A second remarkable class of transport coefficients are those satisfying the relation
\begin{equation}\label{remk}
D(\rho)\sigma''(\rho)=D'(\rho)\sigma'(\rho)\,.
\end{equation}
Recalling \eqref{ein}, it can be proved (see \cite{mft}) that this condition is satisfied either  when $\sigma(\rho)$ is constant and then $D(\rho)>0$ is arbitrary (a model that has transport coefficients of this type is the  Ginzburg-Landau model \cite{GPV, Spohn}) or otherwise when $D(\rho)=c\sigma'(\rho)$ for an arbitrary constant $c>0$. A model that has the transport coefficients of this form with $c=1$ is the classic zero range model.

It is not easy to explicitly construct models satisfying such conditions since the transport coefficients can be computed only for diffusive models that are at the same time reversible and of gradient type.
Otherwise only variational representations of the transport coefficients are available.  We discussed however in the previous sections a large class of models that are at the same time gradient and reversible and it is therefore natural and interesting to investigate which are the models having such remarkable form of the transport coefficients.
We stress again that we did not develop a rigorous derivation of the form of the transport coefficients.

\section{Zero range dynamics}\label{sec:ZRD}

In this section we discuss in detail the zero range dynamics obtaining a complete characterization of integrability, in the sense of Subsection~\ref{sec:int}. Note that the class of zero range dynamics discussed here is much wider than the classic zero range model. Indeed, we will see that having a product invariant measure outside equilibrium is a feature that is in general lost.

First of all we observe that we can compute the transport coefficients using formulas \eqref{trc} and \eqref{zrrates}. We start with the mobility,
\begin{align}\label{mob}
\sigma(\rho)&=\frac 12\mathbb E_{g_{\lambda(\rho)}\otimes g_{\lambda(\rho)}}\left(\int_{-\zeta_2}^{\zeta_1}R(\zeta_1, \zeta_2,j)j^2dj\right)\\
&=\int_0^\infty d \zeta_1\int_0^\infty d \zeta_2 \int_0^{\zeta_1}d j\, \frac{g(\zeta_1-j)}{g(\zeta_1)}S(j)j^2\frac{e^{\lambda(\zeta_1+\zeta_2)}}{(Z(\lambda))^2}g(\zeta_1)g(\zeta_2)\\
&=\int_0^\infty d \zeta_1\int_0^{\zeta_1}d j\,\frac{g(\zeta_1-j)}{g(\zeta_1)}S(j)j^2\frac{e^{\lambda\zeta_1}}{Z(\lambda)}g(\zeta_1)\\
&=\int_0^\infty d \zeta_1\int_0^{\zeta_1}d jg(\zeta_1-j)S(j)j^2e^{\lambda j}\frac{e^{\lambda(\zeta_1-j)}}{Z(\lambda)}\\
&=\int_0^\infty S(j)j^2e^{\lambda j}dj\,.
\end{align}
Note that in the above computation we need to assume $\lambda\equiv\lambda(\rho)<\lambda_c$, since otherwise the invariant measure is not defined. We discover however that the formula for the mobility may give an infinite value for some subcritical values of $\lambda$. Indeed we have that, depending on the function $S$, only for $\lambda<\tilde\lambda_c\leq \lambda_c$ the mobility is finite. Since the function $S$ is determining the distribution of the amount of mass that is flowing in a single jump, we have that there may be a second critical value such that it is not expected a hydrodynamic behavior above this threshold. The value of $\tilde \lambda_c$ may coincide with $\lambda_c$ or even be equal to $-\infty$, in such a case for these models we do not expect a hydrodynamic behavior. This phenomenon is related to the fact that, due to the possibility of a large amount of mass flowing in a single step, there is no reason to expect an absolutely continuous measure in the scaling limit. This is a mechanism similar to the one is behind the possibility of condensation. A rigorous discussion of the scaling limit of such models is challenging and we do not consider this issue here.

The coefficient $\hat D(\rho)$ defined in \eqref{trc}, using the local function $h$ as defined in Example~\ref{zre} and Example~\ref{zre2}, is given by
\begin{align*}
\hat D(\rho)&=\mathbb E_{g_{\lambda(\rho)}}\left(\int_0^{\zeta}\frac{g(\zeta-j)}{g(\zeta)}S(j)jdj\right)\\
&=\int_0^\infty d \zeta \int_0^\zeta d j\,  g(\zeta-j)S(j)j\frac{e^{\lambda\zeta}}{Z(\lambda)}\\
&=\int_0^\infty d \zeta \int_0^\zeta d j\, g(\zeta-j)S(j)je^{\lambda j}\frac{e^{\lambda(\zeta-j)}}{Z(\lambda)}\\
&=\int_0^\infty S(j)je^{\lambda j}dj\,,
\end{align*}
where on the right hand side we have that $\lambda=\lambda(\rho)$, which is the inverse function of \eqref{robeta} and as before we have a finite value only for $\lambda<\tilde\lambda_c$ for a suitable critical value $\tilde \lambda_c$.
Therefore, the diffusion coefficient is given by
\begin{align}\label{diff}
D(\rho)=\frac{d}{d\rho}\hat D(\rho)&=\int_0^\infty S(j)j^2e^{\lambda(\rho) j}\lambda'(\rho)dj=\lambda'(\rho)\sigma(\rho)
\end{align}
and the Einstein relation \eqref{ein} is satisfied.

\smallskip

In the following theorem, we will completely characterize the integrable zero range models, i.e., models having rates of the form \eqref{zrrates} (or the equivalent form in the discrete case) and having transport coefficients satisfying either condition \eqref{remkp2} or condition \eqref{remk}.

Since the state space of the model allows for only positive values of the $\eta$, we should also obtain only positive densities $\rho\geq 0$ in the scaling limit. This implies that the mobility has to be such that $\sigma(0)=0$. Otherwise, it is easy to construct a suitable external field and a suitable initial condition
that give a solution of the hydrodynamic equation which is negative at some point. For this reason, in the next theorem we only consider polynomials that have a zero at zero. The case when the mobility is nonnegative in an interval $[l,r]$ with $l>0$ and $\sigma(l)=\sigma(r)=0$ can be reduced to a case with the mobility zero at zero with a shift of $-l$ in the mass on each vertex. However, in the case of a zero range dynamics this situation cannot happen, because for a nonnegative mobility in a bounded interval $[l,r]$ we need to have that the mass on the sites of the lattice stays bounded between the values $l<r$, but this is not possible since the rate at which the mass is transferred depends only on the amount of mass present at the starting site and arbitrarily large values of the mass can be attained. Likewise, by shifting the mass we can reduce the case when the mobility is positive on $[l,+\infty)$ to the case when the mobility is positive on $[0,+\infty)$.

In the sequel we consider therefore $\mathcal P_2(\rho)=b\rho(\rho+a)$ as the generic second order polynomial in \eqref{remkp2}. For the zero range dynamics, from the previous argument, we can only have $b>0$ and $a\geq 0$. We will denote by $\displaystyle{\binom{r}{k}=\frac{r(r-1)\cdots (r-k+1)}{k!}}$, with $r\in \mathbb R$ and $k\in \mathbb N$, the generalized binomial coefficient and by $\Gamma(z)=\int_0^{+\infty}t^{z-1}e^{-t}\,dt$ the Euler gamma function.
\begin{remark}
In the following Theorem we characterize $g$ and $S$ as measures; in the absolutely continuous case the rates
\eqref{zrrates} are absolutely continuous and obtained by the ratio of the densities, in the atomic case the mass is quantized and only integer multiple of the mass $Ma$ (to be defined in the next Theorem) may jump. In this case the rates \eqref{zrrates} are atomic and the weight associated to the rate of $jMa,$ $j\in \mathbb N$ is given by the products of the corresponding weights for the measures $g$ and $S$, as will be specified in \eqref{ilflussozr}.
\end{remark}
\begin{theorem}\label{teozr}
Consider a zero range model whose rates and invariant measures are determined by the functions $g$ and $S$ as in Examples~\ref{zre} and \ref{zre2}.
We have the following:
\begin{itemize}
\item[1)]
Given $a\geq 0$ and $b, c >0$, we have that $D(\rho)=c$ and $\sigma(\rho)=b\rho(\rho+a)$ if and only if one of the possible two cases happen.

\smallskip

\noindent 1A) The first case is when $a>0$; in this case $g$ and $S$ are atomic measures and
\begin{align*}
g=Re^{-aK}\sum_{j=0}^{\infty}\left|\binom{-\frac{1}{M}}{j}\right|e^{-KMaj}\delta (x-Maj)
\end{align*}
with
\begin{align*}
S=\frac{c^2}{b}\sum_{j=1}^{\infty}\frac{e^{-KMaj}}{j}\delta(x-Maj),
\end{align*}
for some arbitrary $R, K\in\mathbb R^+$ and $M=\frac{b}{c}$.

\smallskip

\noindent 1B) The second case is when $a=0$ and then
\begin{equation}\label{g2}
g(x)= \frac{Rx^{\frac{1-M}{M}}e^{-Kx}}{\Gamma(1/M)}\,,
\end{equation}
with
\begin{equation}\label{S2}
S(x)=\frac{be^{-Kx}}{M\Gamma(2)x}\,,
\end{equation}
for some arbitrary $R, K\in\mathbb R^+$ and $M=\frac{b}{c}$.

\vspace{+5pt}
\item[2)] Given $c\in\mathbb R$ we have that $D(\rho)=c\sigma'(\rho)$ if and only if
\begin{align}
S(dx)=2Kc^2\delta\left(x-\frac{1}{2c}\right)\,,
\end{align}
for some $K\in\mathbb R^+$; while instead $g$ is arbitrary.
\end{itemize}
\begin{proof}
We first prove item 1). By \eqref{mob} and \eqref{diff}, in order to have $D(\rho)=c$ and $\sigma(\rho)=b\rho(\rho+a)$, the following conditions have to be satisfied
\begin{align}\label{sys}
\begin{cases}
\displaystyle{\int_0^\infty S(y)y^2e^{\lambda(\rho)y}dy=b\rho(\rho+a)},\\
\displaystyle{b\rho(\rho+a)\lambda'(\rho)=c}.
\end{cases}
\end{align}
By the second equation in \eqref{sys} we get
\begin{align}\label{dife}
\lambda'(\rho)=\frac{c}{b\rho(\rho+a)}.
\end{align}
By solving \eqref{dife} we get the following expression for $\lambda(\rho)$
\begin{equation}
\lambda(\rho)=\left\{
\begin{array}{ll}
\frac{c}{ab}\log\left(\frac{\rho}{\rho+a}\right)+K & a>0\,,\\
-\frac{c}{b\rho} +K & a=0\,,
\end{array}
\right.
\end{equation}
for some $K\in\mathbb R$.
Therefore if we call $M=\frac{b}{c}$ its inverse function is
\begin{equation}\label{rho2}
\rho(\lambda)=\left\{
\begin{array}{ll}
\frac{ae^{Ma\lambda}}{e^{MaK}-e^{Ma\lambda}}\,, & a>0\\
\frac{1}{M(K-\lambda)} & a=0\,.
\end{array}
\right.
\end{equation}
Since
$\displaystyle{\rho(\lambda)=\frac{d}{d\lambda}\left(\log Z(\lambda)\right)}$, integrating we have
\begin{equation}
Z(\lambda)=\left\{
\begin{array}{ll}
R\exp\left( \int\frac{ae^{Ma\lambda}}{e^{MaK}-e^{Ma\lambda}}d\lambda \right)\,, & a>0\,,\\
\frac{R}{(K-\lambda)^{\frac 1M}}\,, & a=0\,,
\end{array}
\right.
\end{equation}
for some constant $R\in\mathbb R^+$.

\smallskip
Let us first discuss the case $1A)$ when $a>0$.
By defining $y:=e^{Ma\lambda}$ we get
\begin{align}\label{zeta}
Z(\lambda)&=R\exp\left( \frac{1}{M}\int\frac{1}{e^{MaK}-y}dy      \right)\\
&=R\exp\left( -\frac{1}{M}\log(e^{MaK}-y) \right)\\
&=\frac{R}{(e^{MaK}-e^{Ma\lambda})^{\frac{1}{M}}}\,.
\end{align}
By the definition of $Z(\lambda)$, relation \eqref{zeta} is equivalent to
\begin{align}\label{gi}
\int_0^\infty g(x)e^{\lambda x}dx=\frac{R}{(e^{MaK}-e^{Ma\lambda})^{\frac{1}{M}}}.
\end{align}
Therefore we obtain that $g(x)$ has to be the inverse Laplace's transform of the right hand side of \eqref{gi}. Since
\begin{align}\label{usoancora}
\frac{R}{(e^{MaK}-e^{Ma\lambda})^{\frac{1}{M}}}=Re^{-Ka}\sum_{j=0}^{\infty}
\left|\binom{-\frac{1}{M}}{j}\right|e^{Ma(\lambda-K)j},
\end{align}
we get that $g$ satisfies \eqref{gi} if and only if it is an atomic measure and
\begin{align}
g(dx)=Re^{-aK}\sum_{j=0}^{\infty}\left|\binom{-\frac{1}{M}}{j}\right|e^{-KMaj}\delta \Big(x-Maj\Big)\,.
\end{align}
Moreover, replacing the first relation in \eqref{rho2} into the first equality of the system \eqref{sys}, we get
\begin{align}\label{inva}
\int_0^\infty S(y)y^2e^{\lambda y} dy&=a^2b\frac{e^{MaK}e^{Ma\lambda}}{(e^{MaK}-e^{Ma\lambda})^2}\\
&=a^2b e^{MaK}e^{Ma\lambda}\sum_{j=0}^{\infty}\left|\binom{-2}{j}\right|e^{Ma\lambda j}e^{-MaK(2+j)},
\end{align}
which implies that the unique measure that satisfies \eqref{inva} is the inverse Laplace's transform of the right hand side of \eqref{inva}, which means
\begin{align*}
S(dx)=\frac{b}{M^2}\sum_{j=1}^{\infty}\frac{1}{j}e^{-MaKj}\delta\Big(x-Maj\Big)\,,
\end{align*}
where we used the identity $\left|\binom{-2}{j-1}\right|=j$, $j\geq 1$.

\smallskip

We consider now the case $1B)$, that is $a=0$. We obtain that
\begin{equation}
\int_0^{+\infty}g(x)e^{\lambda x}dx=\frac{R}{(K-\lambda)^{\frac 1M}}\,,
\end{equation}
and the inverse Laplace transform of the right hand side of the above formula can be computed and it is given by
\begin{equation}
g(x)= \frac{Rx^{\frac{1-M}{M}}e^{-Kx}}{\Gamma(1/M)}\,.
\end{equation}
Considering the first equation in \eqref{sys} with $a=0$ and the second equation in \eqref{rho2} we obtain
\begin{equation}
\int_0^{+\infty}S(y)y^2 e^{\lambda y}dy= \frac{b}{M(K-\lambda)^2}\,,
\end{equation}
and again the inverse laplace transform can be computed and gives
\begin{equation}
S(x)=\frac{be^{-Kx}}{M\Gamma(2)x}\,.
\end{equation}
In both cases the inverse Laplace transforms are corresponding to classic cases and can be found on textbooks.

\smallskip

We now prove item 2). By \eqref{diff}, it holds that $D(\rho)=c\sigma'(\rho)$ if and only if
\begin{equation}
\frac{d}{d\rho}\pare{\log\sigma(\rho)}=\frac{1}{c}\lambda'(\rho).
\end{equation}
Therefore
\begin{align}\label{sig}
\sigma(\rho)=Ke^{\frac{\lambda(\rho)}{c}},
\end{align}
for some constant $K\in\mathbb R^+$. By \eqref{mob} and \eqref{sig} we get
\begin{equation}
\int_0^\infty S(dj)j^2e^{\lambda(\rho)j}dj=Ke^{\frac{\lambda(\rho)}{c}}
\end{equation}
which implies
\begin{align}
S(dx)=Kc^2\delta\left(x-\frac{1}{c}\right)\,.
\end{align}
Since $\lambda(\rho)$ is arbitrary the function $g$ is arbitrary too.
\end{proof}
\end{theorem}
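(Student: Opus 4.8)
The plan is to base everything on the two formulas for the transport coefficients established just above the theorem: the mobility $\sigma(\rho)=\int_0^\infty S(j)j^2 e^{\lambda(\rho)j}\,dj$ in \eqref{mob} and the identity $D(\rho)=\lambda'(\rho)\sigma(\rho)$ in \eqref{diff}, together with $\rho(\lambda)=(\log Z)'(\lambda)$ from \eqref{robeta} and the fact that $Z(\lambda)=\int_0^\infty g(x)e^{\lambda x}\,dx$ is the Laplace transform of $g$. The whole argument is then a chain of equivalences whose engine is the injectivity of the Laplace transform: once a target function of $\lambda$ is identified, both $g$ (through $Z$) and the weight $S$ are determined as the unique measures with the prescribed transform.

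For item 1, I would impose $D(\rho)=c$ and $\sigma(\rho)=b\rho(\rho+a)$. Combining these with $D=\lambda'\sigma$ yields the separable ODE $\lambda'(\rho)=c/[b\rho(\rho+a)]$. I would integrate it, treating $a>0$ by partial fractions and $a=0$ directly, to get $\lambda(\rho)$ up to an additive constant $K$, and invert to obtain $\rho(\lambda)$ with $M:=b/c$. A further integration of $\rho(\lambda)=(\log Z)'(\lambda)$ gives $Z(\lambda)$ up to a multiplicative constant $R$. Reading this as a Laplace transform, I would compute the inverse transform to read off $g$: when $a>0$, expanding the resulting power $(e^{MaK}-e^{Ma\lambda})^{-1/M}$ as a generalized binomial series in $e^{Ma\lambda}$ exhibits $Z$ as a sum of exponentials $e^{Ma\lambda j}$, whose inverse transforms are Dirac masses at $Maj$, producing the stated atomic $g$; when $a=0$, $Z(\lambda)=R(K-\lambda)^{-1/M}$ is a textbook transform giving $g(x)\propto x^{(1-M)/M}e^{-Kx}$. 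Finally, substituting $\rho(\lambda)$ into the mobility equation $\int_0^\infty S(j)j^2 e^{\lambda j}\,dj=b\rho(\lambda)(\rho(\lambda)+a)$, expanding the explicit right-hand side (again a binomial series for $a>0$, a power of $(K-\lambda)$ for $a=0$) and inverting once more yields $S$; the identity $\bigl|\binom{-2}{j-1}\bigr|=j$ simplifies the weights in the $a>0$ case.

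For item 2, since a zero range mobility can never be constant, the operative case of \eqref{remk} is $D(\rho)=c\sigma'(\rho)$. Using $D=\lambda'\sigma$ this reads $(\log\sigma)'(\rho)=\lambda'(\rho)/c$, which integrates to $\sigma(\rho)=Ke^{\lambda(\rho)/c}$. Substituting into the mobility formula gives $\int_0^\infty S(j)j^2 e^{\lambda j}\,dj=Ke^{\lambda/c}$, and since the only positive measure whose Laplace transform is a single exponential is a Dirac mass, I would conclude that $S(j)j^2$ is a point mass, hence $S$ is a single Dirac delta at the appropriate location with the appropriate weight. Here $g$ (equivalently $\lambda(\rho)$, equivalently $Z$) stays arbitrary, because the condition constrains only the shape of $\sigma$ as a function of $\lambda$ while the mobility integral pins down $S$ alone.

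The main obstacle is the necessity (``only if'') direction, i.e.\ showing these are the \emph{only} solutions. The decisive analytic input is the injectivity of the Laplace transform, guaranteeing that $Z$ determines $g$ and that $\int S(j)j^2 e^{\lambda j}\,dj=(\text{given function of }\lambda)$ determines $Sj^2$, hence $S$. To make this rigorous I would verify that the right-hand sides really are Laplace transforms of positive measures on $[0,\infty)$ in the relevant range of $\lambda$, check the positivity and normalizability constraints (that the generalized binomial coefficients keep a constant sign so the atomic weights are positive, that the $\Gamma$-normalizations are correct), and identify the secondary critical value $\tilde\lambda_c$ below which the moment integrals in \eqref{mob} and \eqref{diff} converge, so that the transport coefficients are genuinely finite.
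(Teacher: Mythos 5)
Your proposal follows essentially the same route as the paper's proof: the same system coming from \eqref{mob} and \eqref{diff}, the same separable ODE for $\lambda(\rho)$, the same integration to $Z(\lambda)$, and the same Laplace-transform inversion (with the generalized binomial expansion and the identity $\left|\binom{-2}{j-1}\right|=j$ in the atomic case, and the textbook transforms in the $a=0$ case), as well as the identical argument $\sigma=Ke^{\lambda/c}$ forcing $S$ to be a Dirac mass in item 2. The extra rigor you flag (positivity of the atomic weights, injectivity of the Laplace transform, the critical value $\tilde\lambda_c$) is a sensible supplement but does not change the argument.
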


\begin{remark}
We observe that it is not possible to construct a zero range dynamics in which the mobility is equal to a constant $K\in\mathbb R$. Indeed, by \eqref{mob}, the condition $\sigma(\rho)=K$ is satisfied either if $\lambda(\rho)$ is constant or if $S(dx)=\delta(x)$ is a delta measure at zero. In both cases we have trivial dynamics.
\end{remark}

\noindent Recall the identities $\binom{x}{y}=\frac{\Gamma(x+1)}{\Gamma(y+1)\gamma(x-y+1)}$ and $\frac{\Gamma(s-a+1)}{\Gamma(s-b+1)}=(-1)^{b-a}\frac{\Gamma(b-s)}{\Gamma(a-s)}$.

\noindent Item $1A)$ in Theorem \ref{teozr} identifies a class of discrete zero range models that are integrable in the sense described in Section \ref{sec:int}. The mass is quantized and only a multiple of the mass $Ma$ can jump in a single step. Starting initially with masses that are multiple of $Ma$ this property is preserved therefore in time. By formula \eqref{zrrates} we have that the rate at which a mass equal to $Maj$, $j\in \mathbb N$ is jumping from one site containing a mass equal to $\eta(x)Ma$, to a nearest
neighbor site is given by
\begin{equation}\label{ilflussozr}
Q(\eta(x)Ma,jMa)=\frac{\Gamma\left(\eta(x)-j+\frac 1M\right)\Gamma\left(\eta(x)+1\right)}{\Gamma\left(\eta(x)+\frac 1M\right)\Gamma\left(\eta(x)-j+1\right)}\frac {c^2}{bj}\,, \qquad 1\leq j\leq \eta(x)\,.
\end{equation}
As a special case, we consider $c= 1/(2s)$, $b=1/(4s^2)$, $a=2s$ for a suitable parameter $s>0$ so that $M=1/(2s)$. We obtain that the mass that can jump must be an integer number (so that we have particles with unitary mass jumping) with the rate \eqref{ilflussozr} becoming
\begin{equation}\label{giardina}
Q(\eta(x),j)=\frac{\Gamma\left(\eta(x)-j+2s\right)\Gamma\left(\eta(x)+1\right)}{\Gamma\left(\eta(x)+2s\right)\Gamma\left(\eta(x)-j+1\right)}\frac {1}{j}\,, \qquad 1\leq j\leq \eta(x)\,,
\end{equation}
where  $\eta(x)$ and $j$ are integers.
Formula \eqref{giardina} gives the same rates considered in \cite{FG} where the authors obtain a boundary driven interacting particle system that consists of a finite chain of $N$ sites connected at its endpoints to two reservoirs and whose non-equilibrium steady state has a closed form \cite{FG,CFFGR,CFGGT}.
Hence, our result shows that the models in \cite{FG} are, modulo generalization to non integer mass jumping, the only discrete zero range dynamics that have the macroscopic integrability in \ref{sec:int} of the stationary non equilibrium state .

\begin{remark}\label{rem:transf}
Note that, even if we constructed a large class of models, depending on several parameters, we can without loss of generality restrict to the case of particles having unitary mass. Indeed given a particle system with particles of unitary mass, we can construct in a trivial way another system with particles of mass $\mu$ just by considering the same jumping rate related to the number of particles, and changing the mass. Using the formulas for the transport coefficients we can directly check that the new transport coefficients $D^\mu$, $\sigma^\mu$ are related to the unitary one by the relation
\begin{equation}\label{transformation}
\left\{
\begin{array}{l}
D^\mu(\rho)=D\left(\frac{\rho}{\mu}\right)\,,\\
\sigma^\mu(\rho)=\mu^2\sigma\left(\frac{\rho}{\mu}\right)\,.
\end{array}
\right.
\end{equation}
The integrability conditions are preserved inside the class of all those models obtained varying $\mu$. Since the rates \eqref{ilflussozr}, once fixed the constants, depend only on the occupation numbers $\eta(x)$ and $j$ we have that all models in $1A)$ are naturally obtained in this way. For this reason, in the sequel we will mainly consider the discrete unitary case.
\end{remark}

\smallskip

On the other hand, item $1B)$ identifies instead a class of solvable zero range dynamics with a continuous distribution of the mass flowing. Indeed recalling \eqref{zrrates} we obtain macroscopically integrable zero range models with rates of jump given by
\begin{equation}
Q(\eta(x),j)=\frac{b}{M\Gamma(2)}\frac 1j \left(1-\frac{j}{\eta(x)}\right)^{\frac{1-M}{M}}\,;
\end{equation}
the construction and definition of models with these rates is delicate and we do not discuss here.

\smallskip

Finally item $2)$ gives models that even if considered in contact with reservoirs have a large deviations rate functional for the invariant measure that is local, i.e., there are no long range correlations. The result of our Theorem is that this situation may happen only if we have a classic zero range model in which the mass can take values that are multiple of a fixed amount and at each jump of the process just one particle can jump.

\section{Other examples}\label{sec:other}
In this section we briefly discuss some integrable models beyond the zero range case. A more general treatment is possible but it would require a much longer discussion that it is not possible to insert here.

\subsection{Generalized KMP models}
We start with a class of models that are gradient and reversible with respect to the measure \eqref{mis} when the function $g(\zeta)$ is constant, say $1$, and for which the integrability conditions of Section~\ref{sec:int} hold. In the continuous case we have therefore exponential invariant measures while in the discrete case we obtain geometric distributions. Furthermore, in view of Remark~\ref{rem:transf}, for the discrete case we consider models for which the elementary unit of mass that may jump is one. This
is because of the validity of the transformation \eqref{transformation}.
Note also that both properties \eqref{remk} and \eqref{remkp2} are preserved by this transformation.

\begin{example}[Generalized discrete KMP model]
We discuss now a particular case of Example~\ref{ex1}.
Let us fix $k>0$ and $c\in\mathbb R$ such that $k-\frac{|c|}{2}\geq 0$ and consider the case in which the rate $R(\zeta, j)$ is defined as in \eqref{rataj} when $h(j)=k$ and $\displaystyle{A\pare{\zeta, a}=\frac{ca}{\zeta+1}}$ which is a generalized version of the discrete KMP model that is obtained when $k=\frac{1}{2}$ and $c=1$.
By a direct computation, distinguishing several cases, we have that the condition $k-\frac{|c|}{2}\geq  0$ guarantees that
\begin{align}\label{rate}
\inf_{\zeta_1, \zeta_2\in\mathbb N}\inf_{j\in [-\zeta_2, \zeta_1]}R(\zeta_1,\zeta_2, j)\geq 0.
\end{align}

By Example \ref{ex1} we know that the gradient and the reversibility conditions are satisfied. The instantaneous current is given by
\begin{align}\label{isc}
j_\eta(x,y)&=\sum_{j=1}^{\eta(x)}j R\pare{\eta(x), \eta(y), j}-\sum_{j=1}^{\eta(y)}jR\pare{\eta(y), \eta(x), j}\\
&=k(\eta(x)-\eta(y)).
\end{align}
When the function $g$ is constant and the measure defined in \eqref{mis} is atomic,
the normalization factor is $Z(\lambda)=\sum_{\zeta=0}^\infty e^{\lambda \zeta}=\frac{1}{1-e^\lambda}$ and the density: $\rho(\lambda)=\frac{e^\lambda}{1-e^{\lambda}}$. Therefore, its inverse function is $\lambda(\rho)=\log\frac{\rho}{1+\rho}$.
The marginal of the measure $g_\lambda$, written in terms of $\rho$, has the following expression:
\begin{align}\label{meas}
g_{\lambda(\rho)}(\zeta)=\pare{\frac{\rho}{\rho+1}}^\zeta \frac{1}{\rho+1}.
\end{align}
By definition of $\lambda(\rho)$ we have that
\begin{equation*}
\hat D(\rho)=\mathbb E_{g_{\lambda(\rho)}}\left( k\zeta\right)=k\rho.
\end{equation*}
Therefore, $D(\rho)=\frac{d}{d\rho} \hat D(\rho)=k$,
while the mobility coefficient is given by
\begin{align}\label{sigma}
\sigma(\rho)=&\frac 12\mathbb E_{g_{\lambda(\rho)}\otimes g_{\lambda(\rho)} }\pare{\sum_{j=-\zeta_2}^{\zeta_1}j^2 R\pare{\zeta_1, \zeta_2, j}}.
\end{align}
Observe that
\begin{align}
\sum_{j=-\zeta_2}^{\zeta_1}j^2 R\pare{\zeta_1, \zeta_2, j}&=\sum_{j=0}^{\zeta_1}j^2 R\pare{\zeta_1, \zeta_2, j}+\sum_{j=0}^{\zeta_2}j^2 R\pare{\zeta_1, \zeta_2, j}\\
&=\sum_{j=0}^{\zeta_1}\pare{k-\frac{c}{2}}j+\sum_{j=0}^{\zeta_1}\frac{c}{\zeta_1+\zeta_2+1}j^2\\
&\hspace{+10pt}+\sum_{j=0}^{\zeta_2}\pare{k-\frac{c}{2}}j+\sum_{j=0}^{\zeta_2}\frac{c}{\zeta_1+\zeta_2+1}j^2\\
&=\pare{k-\frac{c}{2}}\pare{\frac{\zeta_1(\zeta_1+1)}{2}+\frac{\zeta_2(\zeta_2+1)}{2}}\\
&\hspace{+10pt}+\frac{c}{\zeta_1+\zeta_2+1}\pare{\frac{\zeta_1(\zeta_1+1)(2\zeta_1+1)}{6}+\frac{\zeta_2(\zeta_2+1)(2\zeta_2+1)}{6}}.
\end{align}
Since
\begin{align*}
\hspace{-15pt}\frac{\zeta_1(\zeta_1+1)(2\zeta_1+1)}{6}+\frac{\zeta_2(\zeta_2+1)(2\zeta_2+1)}{6}=&\frac{1}{3}\pare{\zeta_1+\zeta_2+1}\pare{\zeta_1^2+\zeta_2^2-\zeta_1\zeta_2}+\frac{1}{6}\pare{\zeta_1+\zeta_2+1}^2\\
&-\frac{1}{6}\pare{\zeta_1+\zeta_2+1},
\end{align*}
we get that
\begin{align}
\sum_{j=-\zeta_2}^{\zeta_1}j^2 R\pare{\zeta_1, \zeta_2, j}=&
\pare{k-\frac{c}{2}}\pare{\frac{\zeta_1(\zeta_1+1)}{2}+\frac{\zeta_2(\zeta_2+1)}{2}}\\
&+\frac{c}{3}\pare{\zeta_1^2+\zeta_2^2-\zeta_1\zeta_2}+\frac{c}{6}\pare{\zeta_1+\zeta_2+1}-\frac{c}{6}\\
=&\pare{\frac{k}{2}+\frac{c}{12}}\pare{\zeta_1^2+\zeta_2^2}+\pare{\frac{k}{2}-\frac{c}{12}}\pare{\zeta_1+\zeta_2}-\frac{c}{3}\zeta_1\zeta_2.
\end{align}
By \eqref{sigma} and \eqref{meas}, we get
\begin{align}\label{mobi}
\sigma(\rho)=&\frac{1}{2(\rho+1)^2}\sum_{\zeta_1=0}^\infty\sum_{\zeta_2=0}^\infty \pare{\frac{k}{2}+\frac{c}{12}}\pare{\zeta_1^2+\zeta_2^2}\pare{\frac{\rho}{\rho+1}}^{\zeta_1+\zeta_2}\\
&+\frac{1}{2(\rho+1)^2}\sum_{\zeta_1=0}^\infty\sum_{\zeta_2=0}^\infty \pare{\frac{k}{2}-\frac{c}{12}}\pare{\zeta_1+\zeta_2}\pare{\frac{\rho}{\rho+1}}^{\zeta_1+\zeta_2}\\
&-\frac{1}{2(\rho+1)^2}\sum_{\zeta_1=0}^\infty\sum_{\zeta_2=0}^\infty\frac{c}{3}\zeta_1\zeta_2\pare{\frac{\rho}{\rho+1}}^{\zeta_1+\zeta_2}.
\end{align}
Since, for all $\lambda\in(0,1)$, it holds that
\begin{align}\label{ds}
&\sum_{\zeta_1=0}^\infty\sum_{\zeta_2=0}^\infty\zeta_1^2\lambda^{\zeta_1+\zeta_2}=\frac{2\lambda^2}{(1-\lambda)^4}+\frac{\lambda}{(1-\lambda)^3},\\
&\sum_{\zeta_1=0}^\infty\sum_{\zeta_2=0}^\infty\zeta_1\lambda^{\zeta_1+\zeta_2}=\frac{\lambda}{(1-\lambda)^3},\label{ds2}\\
&\sum_{\zeta_1=0}^\infty\zeta_1\lambda^{\zeta_1}=\frac{\lambda}{(1-\lambda)^2},  \label{ds3}
\end{align}
we get
\begin{align}\label{bb}
\sigma(\rho)=&\frac{1}{(\rho+1)^2} \pare{\frac{k}{2}+\frac{c}{12}}\pare{2(\rho+1)^2\rho^2+(\rho+1)^2\rho}\\
&+\frac{1}{(\rho+1)^2} \pare{\frac{k}{2}-\frac{c}{12}}(\rho+1)^2\rho-\frac{1}{(\rho+1)^2}\frac{c}{3}(\rho+1)^2\rho^2\\
&=k(\rho^2+\rho).
\end{align}
\end{example}

\begin{example}[Generalized continuous KMP model] We discuss now a continuous version of the previous example.
In the exponential case in which the measure $g_\lambda(d\zeta)$ defined in \eqref{mis} is absolutely continuous with respect to the Lesbegue measure and $g(\zeta)=1$, we compute the normalization factor $Z(\lambda)=\int_0^\infty e^{\lambda\zeta} d\zeta=\frac{-1}{\lambda}$, $\lambda <0$, and the density $\rho(\lambda)=-\frac{1}{\lambda}$. Its inverse function is $\lambda(\rho)=\frac{-1}{\rho}$ and the marginal of the measure $g_\lambda$, written in terms of $\rho$, has the following expression
\begin{align*}
g_{\lambda(\rho)}(d\zeta)=\frac{1}{\rho}e^{-\frac{\zeta}{\rho}}d\zeta.
\end{align*}

We fix $k\in\mathbb R^+$ and $c\in\mathbb R$ such that $k-\frac{|c|}{2}\geq 0$ and consider the case in which $h(j)=k$ and $A\pare{s, y}=\frac{cy}{s}$. Therefore, $\forall j>0$,
\begin{align}\label{ratec}
R\pare{\zeta, j}={\frac{k}{j}+\frac{c}{j\pare{\zeta_1+\zeta_2}}\pare{j-\frac{\zeta_1+\zeta_2}{2}}}=\pare{k-\frac{c}{2}}\frac{1}{j}+\frac{c}{\zeta_1+\zeta_2}.
\end{align}
Observe that the KMP is obtained for $k=\frac{1}{2}$ and $c=1$.
Condition $k-\frac{|c|}{2}\geq 0$ guarantees that
$\displaystyle{
\inf_{\zeta_1\geq 0, \zeta_2\geq 0}\inf_{j\in [-\zeta_2, \zeta_1]}R(\zeta_1, \zeta_2, j)\geq 0.}$
By Example \ref{exco} we know that the dynamics is gradient and reversible with respect to the measure $g_\lambda(d\zeta)$.
The instantaneous current is given by
\begin{align*}
j_\eta(x,y)&=\int_{-\eta(y)}^{\eta(x)}jR\pare{\eta(x), \eta(y), j}dj=k\eta(x)-k\eta(y)\,.
\end{align*}

By \eqref{trc} the coefficient $\hat D(\rho)=\mathbb E_{g_{\lambda(\rho)}}\pare{k\zeta}=k\rho$ and therefore $D(\rho)=\frac{d}{d\rho}\hat D(\rho)=k$, while the mobility coefficient is given by
\begin{align*}
\sigma(\rho)=\frac 12\mathbb{E}_{g_{\lambda(\rho)}\otimes g_{\lambda(\rho)}}\pare{\int_{-\zeta_2}^{\zeta_1}R(\zeta,j)j^2dj}.
\end{align*}
Observe that
\begin{align*}
\int_{-\zeta_2}^{\zeta_1}R(\zeta,j)j^2dj
&=\int_{0}^{\zeta_1}\pare{k-\frac{c}{2}}j\,dj-\int_{-\zeta_2}^0\pare{k-\frac{c}{2}}jdj+\int_{-\zeta_2}^{\zeta_1}{\frac{cj^2}{\zeta_1+\zeta_2}}dj\\
&=\pare{k-\frac{c}{2}}\pare{\frac{\zeta_1^2}{2}+\frac{\zeta_2^2}{2}}+\frac{c}{3(\zeta_1+\zeta_2)}\pare{\zeta_1^3+\zeta_2^3}\\
&=\pare{k-\frac{c}{2}}\pare{\frac{\zeta_1^2}{2}+\frac{\zeta_2^2}{2}}+\frac{c}{3}\pare{\zeta_1^2+\zeta_2^2-\zeta_1\zeta_2}
\end{align*}
Therefore
\begin{align*}
\sigma(\rho)&=\frac 12\pare{\frac{k}{2}+\frac{c}{12}}\int_0^\infty\int_0^\infty                              \pare{{\zeta_1^2}+{\zeta_2^2}}
\frac{e^{-\frac{\zeta_1+\zeta_2}{\rho}}}{\rho^2}d\zeta_1d\zeta_2\\
&\hspace{+10pt}-\frac{c}{6}\int_0^\infty\int_0^\infty\zeta_1\zeta_2
\frac{e^{-\frac{\zeta_1+\zeta_2}{\rho}}}{\rho^2}d\zeta_1d\zeta_2\\
&=\frac 12\pare{k-\frac{c}{6}}\rho^2.
\end{align*}
We remark once again that and the  structure of the fluctuations for the above model is not straightforward since we are in a case on which the total rate of jump across a bond is infinite so that the general theory for classic particle systems cannot be applied directly.

\end{example}

\subsection{Misanthrope type process}
We briefly consider some cases of the models in Section~\ref{sec:mis}.
We consider rates of the form \eqref{RGPR}; substituting in \eqref{sbaglirip} we obtain that $H(\zeta)=-\frac{k}{\hat S(\zeta)}$ and we recall also that $g=\hat S$. Let us call
$m_1(\lambda)=\int_0^{\infty} S(j)j e^{\lambda j} dj$ and $m_2(\lambda)=\int_0^{\infty} S(j)j^2 e^{\lambda j} dj$.
Considering  $\lambda<0$ we have
\begin{equation}\label{nn}
Z(\lambda)=\int_0^{\infty}\hat S(\zeta)e^{\lambda\zeta}d\zeta=
-\frac{1}{\lambda}\int_0^{\infty} S(j)j e^{\lambda j} dj=-\frac{m_1(\lambda)}{\lambda}\,.
\end{equation}
Applying formulas of section \ref{sec:TC}, the mobility coefficient is given by:
\begin{eqnarray*}
	\sigma(\rho) & = & \frac 12\mathbb E_{g_{\lambda(\rho)}\otimes g_{\lambda(\rho)} }\left(\int_{-\zeta_2}^{\zeta_1}R(\zeta_1,\zeta_2,j)j^2 dj\right)\\
	& = &
	k
	\int_0^{\infty}\int_0^{\zeta_1}S(j)j^2
	\frac{e^{\lambda\zeta_1}}{Z(\lambda)}\left(\int_0^{\infty}
	\frac{e^{\lambda\zeta_2}}{Z(\lambda)}d\zeta_2\right) dj d\zeta_1\\
	& = &
	k\frac{-1}{\lambda Z(\lambda)}
	\int_0^{\infty}S(j)j^2
	\left(\int_j^{\infty}
	\frac{e^{\lambda\zeta_1}}{Z(\lambda)}
	d\zeta_1\right) dj\\
	& = &
	k\frac{1}{(\lambda Z(\lambda))^2}
	\int_0^{\infty}S(j)j^2 e^{\lambda j} dj,\\
	& =& \frac{m_2(\lambda)}{m_1(\lambda)^2}\,,
\end{eqnarray*}
where $\lambda=\lambda(\rho)$.
Similarly, for the diffusion coefficient we have:
\begin{equation}
\hat D(\rho) =\mathbb E_{g_{\lambda(\rho)}}(h(\eta))=
-k\int_0^{\infty}\frac{e^{\lambda\zeta}}{Z(\lambda)} d\zeta
=
\frac{k}{\lambda Z(\lambda)}=-\frac{k}{m_1(\lambda)}\,.
\end{equation}
The validity of the Einstein relation \eqref{ein} follows directly since we have
$\hat D=G(m_1)$ and $\sigma=G'(m_1)m_2$ where
$G$ is the function $G(x)=-k/x$. For the zero range dynamics we have the same mechanism but $G(x)=x$.

\smallskip
In this case we may have a confinement of the mass in a bounded interval and consequently we may have integrable models with a quadratic but concave mobility. Therefore, we have to consider, in addition to the polynomials considered in Theorem \ref{teozr}, also the case on which the parameters $a,b$ are negative.
The analysis of the second formula in \eqref{sys} in Theorem \ref{teozr} does not depend on the specific model and can be done also in this case. We have however now that the expansion \eqref{usoancora} is a finite one in the case that $-1/M\in \mathbb N$; this corresponds to models that have a threshold in the number of particles, like for example the exclusion process. We do not discuss a complete characterization that it is difficult in this case.

\section*{Declarations}
The authors have no conflicts of interest. Data sharing is not applicable to this article as no datasets were generated or analysed during the current study.


\begin{thebibliography}{99}
	
	\bibitem{DMF} De Masi A., Ferrari, P.A. \emph{A remark on the hydrodynamics of the zero-range processes} J. Statist. Phys. {\bf 36} (1984), no. 1-2, 81?87.
	
	\bibitem{DMFG} De Masi A., Ferrari P.A., Gabrielli D. \emph{Hidden temperature in the KMP process} In preparation.
	
	\bibitem{DMOP} De Masi A., Olla S., Presutti E. \emph{A note on Fick's law with phase transitions} J. Stat. Phys. 175 (2019), no. 1, 203?211.
	
	
	
	\bibitem{BDGJL1} Bertini, L.; De Sole, A.; Gabrielli, D.; Jona-Lasinio, G.; Landim, C. \emph{Fluctuations in stationary nonequilibrium states of irreversible processes} Phys. Rev. Lett. 87 (2001), no. 4, 040601, 4 pp
	
	
	\bibitem{BDGJL2} Bertini, L.; De Sole, A.; Gabrielli, D.; Jona-Lasinio, G.; Landim, C. \emph{Macroscopic fluctuation theory for stationary non-equilibrium states} J. Statist. Phys. 107 (2002), no. 3-4, 635?675.
	
	\bibitem{BDGJLstat} Bertini, L.; De Sole, A.; Gabrielli, D.; Jona-Lasinio, G.; Landim, C. \emph{Stochastic interacting particle systems out of equilibrium} J. Stat. Mech. Theory Exp. 2007, no. 7, P07014, 35 pp.
	
	\bibitem{mft} Bertini L., De Sole A., Gabrielli D., Jona Lasinio G.,
	Landim C.; \emph{Macroscopic fluctuation theory} Rev. Mod. Phys. \textbf{87}, 593--636 (2015).
	
	
	\bibitem{BGL} Bertini L., Gabrielli D., Lebowitz J:L: \emph{Large deviations for a stochastic model of heat flow} J. Stat. Phys. {\bf 121} (2005), no. 5-6, 843?885.
	
	
	\bibitem{BL} Bouley A., Landim C. \emph{Thermodynamics of Nonequilibrium Driven Diffusive Systems in Mild Contact with Boundary Reservoirs} J. Stat. Phys. {\bf 188} (2022), no. 3, Paper No. 19.
	
	\bibitem{CFFGR} Carinci G., Franceschini C., Frassek R., Giardin\`{a} C., Redig F.
	\emph{The open harmonic process: non-equilibrium steady state, pressure, density large deviation and additivity principle}, Preprint arXiv:2307.14975
	
	
	\bibitem{CFGGT} Carinci G., Franceschini C., Gabrielli D., Giardin\`{a} C., Tsagkarogiannis D.
	\emph{Solvable stationary non equilibrium states}, Preprint arXiv:2307.02793
	
	
	%\bibitem{CFGGT2} Carinci G., Franceschini C., Gabrielli D., Giardin\`{a} C., Tsagkarogiannis D.
	%\emph{Invariant measures for generalized zero range processes}, in preparation.
	
	
	\bibitem{CC} Cocozza-Thivent C. \emph{Processus des misanthropes} (French) [Misanthropic processes] Z. Wahrsch. Verw. Gebiete {\bf 70} (1985), no. 4, 509?523.
	
	\bibitem{DCG} De Carlo L., Gabrielli D. \emph{Totally asymmetric limit for models of heat conduction} J. Stat. Phys. {\bf 168} (2017), no. 3, 508?534.
	
	
	\bibitem{D} Derrida B. \emph{Non-equilibrium steady states: fluctuations and large deviations of the density and of the current} J. Stat. Mech. Theory Exp. 2007, no. {\bf 7}, P07023, 45 pp.
	
	\bibitem{DHS} Derrida B., Hirschberg O. Sadhu T. \emph{Large deviations in the symmetric simple exclusion process with slow boundaries} J. Stat. Phys. {\bf 182} (2021), no. 1, Paper No. 15, 13 pp
	
	\bibitem{DLS-a1} Derrida B., Lebowitz J.L.,  Speer E.R. \emph{Free energy functional for nonequilibrium systems: an exactly solvable case.} Phys. Rev. Lett. {\bf 87} (2001), no.15, 150601, 4 pp.
	
	\bibitem{DLS} Derrida B., Lebowitz J.L.,  Speer E.R. \emph{Large deviation of the density profile in the steady state of the open symmetric simple exclusion process.} J. Statist. Phys. {\bf 107} (2002), no. 3-4, 599?634.
	
	\bibitem{DLS-a2} Derrida B., Lebowitz J.L.,  Speer E.R. \emph{Exact large deviation functional of a stationary open driven diffusive system: the asymmetric exclusion process} J. Statist. Phys. {\bf 110} (2003), no. 3-6, 775?810.	
	
	\bibitem{EH} Evans M.R., Hanney T. \emph{Nonequilibrium statistical mechanics of the zero-range process and related models} J. Phys. A {\bf 38} (2005), no. 19, R195?R240.
	
	\bibitem{EMSZ} Evans, M.R., Majumdar S.N., Zia, R.K.P. \emph{Factorized steady states in mass transport models on an arbitrary graph} J. Phys. A {\bf 39} (2006), no. 18, 4859?4873.
	
	\bibitem{FFG} Franceschini, C., Frassek R., Giardin\`{a} C. \emph{Integrable heat conduction model}. J. Math. Phys. {\bf 64} (2023), no. 4, Paper No. 043304, 21 pp.
	
	\bibitem{FGS} Franceschini C., Goncalves P., Salvador B. 
	\emph{Hydrodynamical behavior for the symmetric simple partial exclusion with open boundary}
	Math. Phys. Anal. Geom. {\bf 26} (2023), no. 2, Paper No. 11, 23 pp.
	
	\bibitem{FG} Frassek R., Giardin\`{a} C. \emph {Exact solution of an integrable non-equilibrium particle system} J. Math. Phys. {\bf 63} (2022), no. 10, Paper No. 103301, 35 pp.
	
	
	
	\bibitem{G1} Garrido P.L. \emph{Notes about the macroscopic fluctuating theory} J. Stat. Mech. Theory Exp. 2021, no. {\bf 2}, Paper No. 024001, 46 pp
	
	\bibitem{G2} Garrido P.L. \emph{Quasipotentials in the nonequilibrium stationary states or a method to get explicit solutions of Hamilton-Jacobi equations} J. Stat. Mech. Theory Exp. 2021, no. {\bf 11}, Paper No. 113206, 46 pp.
	
	\bibitem{GRV} Giardin\`{a} C., Redig F., Vafayi K.
	\emph{Correlation inequalities for interacting particle systems with duality}
	J. Stat. Phys., {\bf 141} (2) (2010), pp. 242-263
	
	\bibitem{GPV} Guo M., Papanicolaou G., Varadhan S.R.S. \emph{Nonlinear diffusion limit for a system with nearest neighbor interactions} Comm. Math. Phys. {\bf 118}, 31?59 (1988).
	
	
	\bibitem{KL}
	C. Kipnis, C. Landim, \textit{Scaling Limits of Interacting Particle Systems}
	Springer--Verlag, Berlin, Heidelberg (1999).
	
	\bibitem{KMP}  Kipnis C., Marchioro C., Presutti E. \emph{Heat flow in an exactly solvable model} J. Statist. Phys. 27 (1982), no. 1, 65?74.
	
	\bibitem{KR} Kuoch K., Redig F. \emph{Ergodic theory of the symmetric inclusion process} Stochastic Process. Appl. {\bf 126} (2016), no. 11, 3480?3498.
	
	\bibitem{LV} Landim C., Velasco S. \emph{Dynamic and static large deviations of a one dimensional SSEP in weak contact with reservoirs} preprint arXiv:2308.10895
	
	\bibitem{Li} T.M. Liggett \emph{Interacting particle systems} Grundlehren der mathematischen Wissenschaften [Fundamental Principles of Mathematical Sciences], {\bf 276}. Springer-Verlag, New York, 1985.
	
	\bibitem{RS} F. Redig, F. Sau \emph{Generalized immediate exchange models and their symmetries}
	Stochastic Process. Appl. {\bf 127} (2017), no.10, 3251?3267.
	
	\bibitem{SS} Sch\"utz, G., Sandow S. \emph{Non-Abelian symmetries of stochastic processes: Derivation of correlation functions for random-vertex models and disordered-interacting-particle
		systems.} Physical Review E, 49.4: 2726 (1994).
	
	\bibitem{Spohn}
	H. Spohn, \emph{Large Scale Dynamics of Interacting Particles}, Springer-Verlag, New York (1991).
	%	
	
\end{thebibliography}
\end{document}